\newtheorem{theorem}{Theorem}
\newtheorem{remark}{Remark}
\begin{document}

\title{High Throughput Inter-Layer Connecting Strategy for Multi-Layer Ultra-Dense Satellite Networks}

\author{\IEEEauthorblockN{Qi~Hao, Di~Zhou,~\IEEEmembership{Member,~IEEE,}
		Min Sheng,~\IEEEmembership{Senior Member,~IEEE}, Yan~Shi, Jiandong Li,~\IEEEmembership{Fellow,~IEEE}}
	\thanks{Qi Hao, Di Zhou, Min Sheng, Yan Shi, and Jiandong Li are with the State Key Laboratory of Integrated Service Networks, Xidian University, Xi'an, Shaanxi, 710071, China. (email: qhao@stu.xidian.edu.cn; \{zhoudi,yshi\}@xidian.edu.cn; \{msheng,jdli\}@mail.xidian.edu.cn). (Corresponding author: Di Zhou)}
}




\maketitle

\begin{abstract}
  Multi-layer ultra-dense satellite networks (MLUDSNs) have soared this meteoric to provide vast throughput for globally diverse services. Differing from traditional monolayer constellations, MLUDSNs emphasize the spatial integration among layers, and its throughput may not be simply the sum of throughput of each layer. The hop-count of cross-layer communication paths can be reduced by deploying inter-layer connections (ILCs), augmenting MLUDSN's throughput. Therefore, it remains an open issue how to deploy ILCs to optimize the dynamic MLUDSN topology to dramatically raise throughput gains under multi-layer collaboration. This paper designs an ILC deployment strategy to enhance throughput by revealing the impacts of ILC distribution on reducing hop-count. Since deploying ILCs burdens the satellite with extra communication resource consumption, we model the ILC deployment problem as minimizing the average hop with limited ILCs, to maximize throughput. The proposed problem is a typical integer linear programming (ILP) problem, of which computational complexity is exponential as the satellite scale expands and the time evolves. Based on the symmetrical topology of each layer, we propose a two-phase deployment scheme to halve the problem scale and prioritize stable ILCs to reduce handover-count, which decreases the exponential complexity to a polynomial one, with 1\% estimation error. Simulation results based on realistic mega-constellation information confirm that the optimal number of ILCs is less than $P\cdot S/2$, where $P$ and $S$ are orbits and satellites per orbit. Besides, these ILCs deploy uniformly in each layer, which raises over $1.55\times$ throughput than isolated layers.

\end{abstract}

\begin{IEEEkeywords}
Multi-Layer ultra-dense satellite network, inter-layer connections, topology, throughput, hops, deployment. 
\end{IEEEkeywords}

\section{Introduction}

\IEEEPARstart{W}{ith} the maturity of Low Earth Orbit (LEO) satellite mass production and the low production costs \cite{9466942,9693471,al2022next}, commercial projects have joined the "deep space" adventure, aiming at providing beamless broadband access service for global users \cite{9473799,li2016qos,zhou2023aerospace}. For example, the StarLink constellation designed by the SpaceX company expects to launch 42000 LEO satellites, where 1791 have been in orbits in 10.16.2021 \cite{9385374}. Under these projects, an unprecedentedly complicated large-scale spatial network will be constructed, i.e., the multi-layer ultra-dense
satellite network (MLUDSN). By cooperating with the application of laser or optical inter-satellite links (ISL) and high-frequent user-to-satellite links, the MLUDSN can enhance the service coverage and throughput \cite{al2022next}, and become an appealing architecture of integration of 6G non-terrestrial networks.

Practically, building an effective MLUDSN is aspirational for many plans \cite{Kuiper2019,9211777,9475443}. The explosive network scale entails the whole network overhead sharply. For instance, small satellites with limited bearing usually carry four space communication terminals (SCTs) \cite{2019A}, and extra SCTS that connect with other layers would increase the burden on the satellites. Worse, the increase in scale and the raising layers leaves more satellites underutilized \cite{9829776,zhou2019distributionally}, and causes the end-to-end (E2E) enormous hop-counts\cite{lu2022enhancing}. Therefore, developing inter-satellite connections (ISCs), especially developing inter-layer connections (ILCs), is critical to high-performance MLUDSN design.

\subsection{Inter-Satelltie Connectivity Design}
Currently, most of the related works on the ISC development are based on single-layer constellations, such as grid+ topology \cite{2019A,9327501,9461407}, 2D-torus topology \cite{wang2022capacity}, etc. Chen et al. first concluded that optimizing the constellation phase factor could effectively reduce the number of hops through the regularity of the grid+ topology \cite{9351765}. Deng et al. in \cite{9371230,9322362} designed the minimum number of satellite deployments based on the backhaul and global coverage requirements. Authors in \cite{2019A} constructed a new constellation configuration by devising the motif pattern (i.e., each satellite is connected to other satellites with the same number and connection mode), which can reduce the number of ISCs and E2E hop count as much as possible by a fixed one-time configuration. Markedly, it overcomes the disadvantage of grid+ topology in medium- or short-distance communication. However, this structure neglects the uneven feature of global traffic distribution, and each satellite needs to build the same number of ISCs, which is also a huge burden for LEO satellites. To this end, Lin et al. propose a DiV configuration to build a cost-effective constellation by dividing satellites into backbone and access based on the uneven terrestrial traffic \cite{9829776}. The interactive mechanism of ISCs from single layer to multiple layers is still an open issue, particularly, \emph{whether the network capability gain from multi-layer cooperation is a simple superposition of each single-layer capability?}

The existing ILC development of the multi-layer constellation is still focused on the network built by LEOs and Medium Earth Orbit satellites (MEOs), or LEOs and Geosynchronous Earth Orbit satellites (GEOs), or LEOs and MEOs and GEOs \cite{6692690,6353997,6494333,9685355,8844689,zhou2020research,huang2022multi}. Huang et al. prioritized ILCs by setting multiple weights such as link length and link load, to guarantee  minimal power consumption and minimal handover counts \cite{huang2022multi}. But the ILC design method ignored the impacts of ILCs on the whole network service capability. In this regard, Liu et al. analyzed the capacity of the MEO-LEO network with the ILCs of the shortest distance, and pointed out that the network capacity was approximately equal to the total capacity of the two layers \cite{liu2015capacity}. Derived from the ILC design method, authors concentrated on the efficient load balance between LEOs and MEOs with respect to the transmission delay, traffic migration, and the optimal deployed height of LEOs and MEOs \cite{6692690,6353997,6494333}. Moreover, Wang et al. in \cite{9685355} exploited both stochastic geometry and queueing theory to minimize the total satellite number for fulfilling seamless coverage. Then there are simulation results that evaluated the impacts of ILCs on the latency of data packets in \cite{6338484}. 

Nevertheless, the aforementioned studies have not sufficiently investigated the network transmission capability e.g. network throughput, on multi-layer collaboration, comprising the impacts of the number of ILCs and the deploying complexity of them. Considering that future satellites may carry five or more SCTs \cite{9393372} based on the Federal Communications Commission (FCC) filing, we explore the transmission gain from inter-layer connectivity under multi-layer collaboration from the following two questions:
\begin{itemize}
  \item \emph{Considering the cost of ILCs, how many ILCs should be built in a given MLUDSN to maximize throughput?}
  \item \emph{How to enhance the stability of ILCs to reduce the handovers when there exists an extreme difference between different time slices in the total topology?}
\end{itemize}

Intuitively, adding ILCs is equivalent to adding link resources inherently, the E2E routing hops of multiple satellite pairs would be significantly reduced in the whole network. Hence the network transmission capability is dramatically enhanced. However, blindly adding ILCs would bring more design challenges and extra SCT costs to actual industrial production. One of the core questions is how to realize the efficient tradeoff between the performance gain and the cost. 
Besides, since the satellites in different layers would construct the intermittent connections based on the relative movements, the total topology may be changeable over time slices, which imposes the satellite configuration counts and the number of handovers. The unique feature of time-varying topology would increase the computational complexity of constructing the optimal ILC strategy with satellite scale. For example, even in  a snapshot, the complexity of searching for the optimal ILC problem is $\mathcal{O}(A\cdot 2^B)$, where $A$ is a general coefficient. As the network scale (i.e. $B$) increases, its complexity increases exponentially, or even factorially if considering the traffic matrix \cite{Kuiper2019}. Therefore, the second core problem is to minimize the number of handovers of each ILC to maintain the stability of the entire topology.

\subsection{Main Contributions}

In this paper, we explore the intertwined impacts of the ILC distribution on the network throughput. Considering the deploying expenditure of ILCs, we formulate the ILC deployment problem as maximizing throughput with minimizing the number of ILCs. Note that network average hop can be simplified as the proxy of throughput since it depends on the topological structure and traffic routing \cite{Kuiper2019}. 
we reformulate the proposed problem as minimizing the product of the number of ILCs and the total APL, to pursue the effective tradeoff between the ILC resource consumption and transmission efficiency. The problem is a typical integer linear programming (ILP) problem and is intractable by exhaustive search on a large network scale. Meanwhile, the problem only considers the link configuration for one snapshot and ignores the interactive relationship of cross-layer satellites between time slots.  Therefore, we propose a two-phase inter-layer connection deployment (TPILCD) algorithm to decouple the problem into two phases, i.e., the candidate ILC set search and maximizing time weight matching, where the time weight is a metric to depict the total duration and remaining duration of ILCs. The proposed algorithm can reduce the computational complexity from exponential one to polynomial, with 1\% error on a small scale. Simulation results demonstrate the performance gains of the proposed algorithm compared with existing benchmarks and guide the inter-layer topology design of the MLUDSN.

\begin{figure}
  \centering
  \includegraphics[scale=0.12]{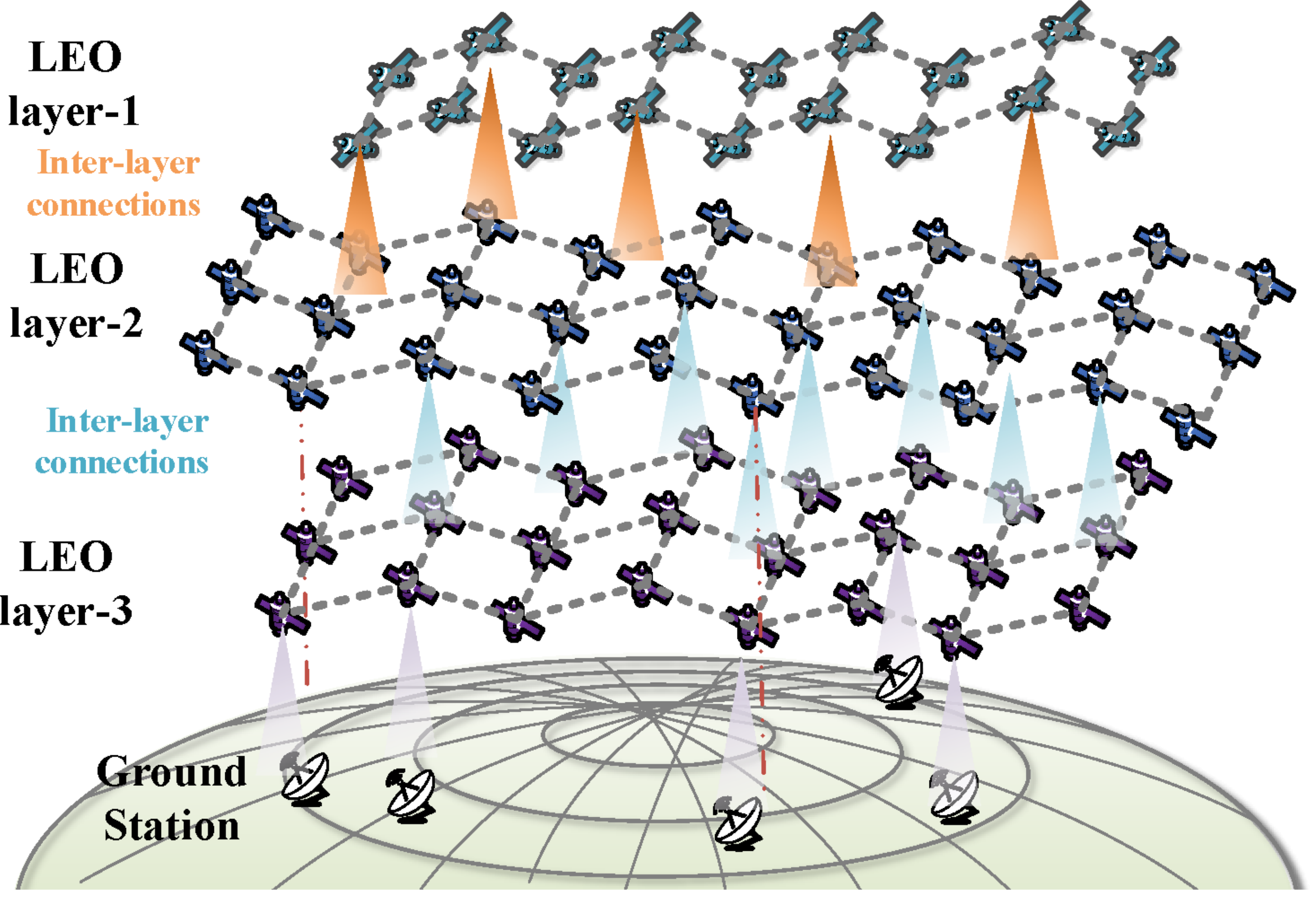}
  \caption{An illustration of a typical MLUDSN.}
  \label{fig1}
\end{figure}

The key contributions are summarized as follows:
\begin{itemize}
   \item The ILC deployment problem is formulated as maximizing throughput with minimizing the number of ILCs, which can be reformulated as minimizing the product of the sum of SCTs and APL by using the inversely proportional relationship of APL and throughput, to build a tractable ILP problem.
   
  \item An analytical mathematical relationship among the APL, network size and the number of ILCs is modeled, to infer that the trend of the APL of the MLUDSN decays logarithmically as the number of ILCs rises.

  \item The two-phase deployment scheme is proposed to halve the problem scale and to select stable ILCs, which can reduce the computational complexity from exponential one to polynomial, with 1\% error on a small scale.
  
  \item Simulation results confirm that the optimal number of ILCs $k$ is $k\leq N_2/2$, when the number of satellites in the two adjacent layers satisfies $N_1>N_2>\ln N_1+3$, while these ILCs deploy evenly in each layer, total throughput can augment over $1.55\times$ than the simple addition of isolated layers.
  
  \item Extensive simulations are conducted on global traffic generated by vast cities with dense populations, to demonstrate that compared to independent layers and other algorithms, cross-layer communication can raise 20.77\% throughput and reduce $2\times$ handover counts and 0.4 hops.
\end{itemize}

The rest of the paper is organized as follows. We present a system model for the MLUDSN including network, ISL model, and multi-layer super-adjacency matrix in Section II. Then Section III formulates the ILC deployment problem with physical constraints and models the APL. Section IV presents our two-phase algorithm and analyzes the computational complexity. The simulation results of algorithm performance and topology design guidance are presented in Section VII. Finally, Section VII concludes the paper.

\section{System Model}
In this section, we briefly introduce the network model, ISL model and graph model to depict the basic feature of the MLUDSN.
\subsection{Network Model}
We consider a typical MLUDSN that is composed of $l$ monolayer LEO constellations. Each LEO layer is a general Walker Delta LEO constellation, defined with a 5-tuple $N_i/P_i/F_i:H_i:\epsilon_i$. Here $H_i$ and $\epsilon_i$ represent the altitude and inclination of the $i$-th layer. $N_i = P_i\times S_i,i=1,2,...,l+1$ indicates the number of orbital planes times the number of satellites per orbit in the $i$-th LEO layer. $F_i$ represents the relative phase difference between satellites of adjacent orbits passing through the equatorial plane \cite{9829776}, which is measured by $1/P_i$, and $0\leq F_i \leq S_i-1$. Then we denote $\mathbf{N}=\{N_1,N_2,...N_i,...,N_l\}$, as a set comprising the number of satellites in each layer. Besides, each satellite is equipped with a total of no more than six antennas for intra-layer and inter-layer connectivities. We further define the satellites moving along the increasing latitude direction when the orbital inclination is less than $90^{\circ}$ as ascending satellites. In this way, the satellites moving along the decreasing latitude direction can be regarded as descending satellites.

\subsection{ISL Model}
We classify the inter-satellite links into two types: intra-layer links and inter-layer links. Fig. \ref{fig1} illustrates a widely adopted intra-layer link connecting mode \cite{9322362,9371230}. For example, in a Walker constellation, each satellite establishes four permanent intra-layer links: two intra-plane links and two inter-plane links. Different from stable intra-layer links, inter-layer links exist among layers. Due to the rapid movement of satellites and the constraint of visibility, building ILCs is extremely challenging.

Intuitively, ILCs are selected from the candidate link set which is constructed by strict constraints such as visibility. In this way, we build the candidate link set with two constraints, i.e. visibility and co-motion feature. The specific descriptions are as follows:

1) Visibility: due to the various inclinations and heights of different layers, ILCs need to consider more physical constraints. As depicted in Fig. \ref{fig_4}, arbitrary two satellites $u$ and $v$, from the $i$-th layer and the $j$-th layer respectively must satisfy a condition, i.e. the distance $d_{uv}$ between them cannot exceed the maximum visual distance $d_{max}$. Firstly, the Euclidean distance $d_{uv}$ is given in spherical coordinates as
\begin{equation} 
  d_{uv}=\left\{
  \begin{aligned}
      &  (h_i+R_e)^2 + (h_j+R_e)^2-2(h_i+R_e)(h_j+R_e)\\
      &\times(\cos \phi_u \cos \phi_v + \cos (\epsilon_i-\epsilon_j)\sin\phi_u\sin\phi_v)
  \end{aligned}
 \right\}^{\frac{1}{2}},
\end{equation}
where $\phi_u$ and $\phi_v$ are the longitude of satellite $u$ and $v$, respectively. $R_e$ is the earth's radius.
Next, the max line-of-sight distance between them is given as 
\begin{equation}
      d_{max}= \sqrt{h_i(h_i+2R_e)}+\sqrt{h_j(h_j+2R_e)}.
\end{equation}
Thus, the condition can be formulated as 
\begin{equation}
  d_{uv}\leq d_{max}.
  \label{con-1}
\end{equation}

2) Co-motion feature: the direction of motion of the satellite should also be taken into account. When two satellites move toward each other in opposite directions, the link may be very unstable. Therefore, the inter-layer links between satellites moving in the same direction should be established first. Based on the aforementioned ascending and descending satellites, we define the latitude difference between the satellite's adjacent time slots as $\Delta \theta$, thus we have 
\begin{equation}
  \Delta \theta_u \times \Delta \theta_v > 0,
  \label{con-2}
\end{equation}
which indicates that they are both ascending satellites or they are both descending satellites.

Besides, we model the link rate of ISLs by Shannon's theorems \cite{9327501}, which are presented as:
\begin{equation}
  \begin{aligned}
  \mathcal{R}_{uv} &= B\cdot \log (1+SNR_{uv})\\
  &= B\cdot \log (1+\frac{P_t\cdot G_{0}^2}{k_B\cdot\tau\cdot B \cdot F_{uv}})
  \end{aligned}
  \label{rate1}
\end{equation} 
where $B$ is the channel bandwidth in Hertz and $SNR_{u,v}$ is the signal-to-noise ratio for an ongoing transmission from the satellite $u$ to $v$ and vice-versa. Then $P_t$ is transmission power, $G_0$ is the maximum gain of transmitter and receiver antennas, $k_B=1.381\cdot 10^{-23}$ is the Boltzmann constant, $\tau=354.18$ is the thermal noise in Kelvin, and $F_{uv}= \bigg(\frac{4\cdot \pi \cdot d_{uv}\cdot f}{c}\bigg)^2$ is the free-space path loss between $u$ and $v$.

\subsection{Graph Model}
Let us consider a MLUDSN denoted by $\mathcal{G}=(\mathcal{V},\mathcal{E})$ that consist of $l$ layers $\mathcal{G}_1$, $\mathcal{G}_2$,..., $\mathcal{G}_l$, where $\mathcal{V}$ and $\mathcal{E}$ comprises the total satellites and ISLs in the MLUDSN, respectively. Each of these layers is denoted as $\mathcal{G}_i(\mathcal{V}_i,\mathcal{E}_i)$, where $\mathcal{V}_i \in \mathcal{V}$ and $\mathcal{E}_i \in \mathcal{E}$ represent the satellites set and intra-layer links of layer $l$, respectively. Totally, the supra-adjacency matrix of a MLUDSN $\mathcal{G}$ can be defined in a block-matrix structure as \cite{wider2016ensemble} 
\begin{equation}
  \mathcal{A}=
  \begin{pmatrix}
    \mathcal{A}_1 & \cdots & \mathcal{A}_{1,l} \\
    \vdots & \ddots & \vdots  \\
    \mathcal{A}_{l,1} & \cdots & \mathcal{A}_l 
  \end{pmatrix},
\end{equation}
Here $\mathcal{A}_1,..., \mathcal{A}_l$ correspond to the layers $\mathcal{G}_1$, $\mathcal{G}_2$,..., $\mathcal{G}_l$ on the diagonal, thus entries of these block matrices indicate the intra-layer links of the MLUDSN. Off-diagonal matrices $\mathcal{A}_{ij}$ for $i,j \in {1,2,...,l}$ with $i\neq j$ represent ILCs that connect satellites in layer $\mathcal{G}_i$ to satellites in layer $\mathcal{G}_j$. Considering the MLUDSN is undirected network, we also have $\mathcal{A}_{ij}^\top = \mathcal{A}_{ji}$. 

\begin{figure}
  \centering
  \includegraphics[width=2 in]{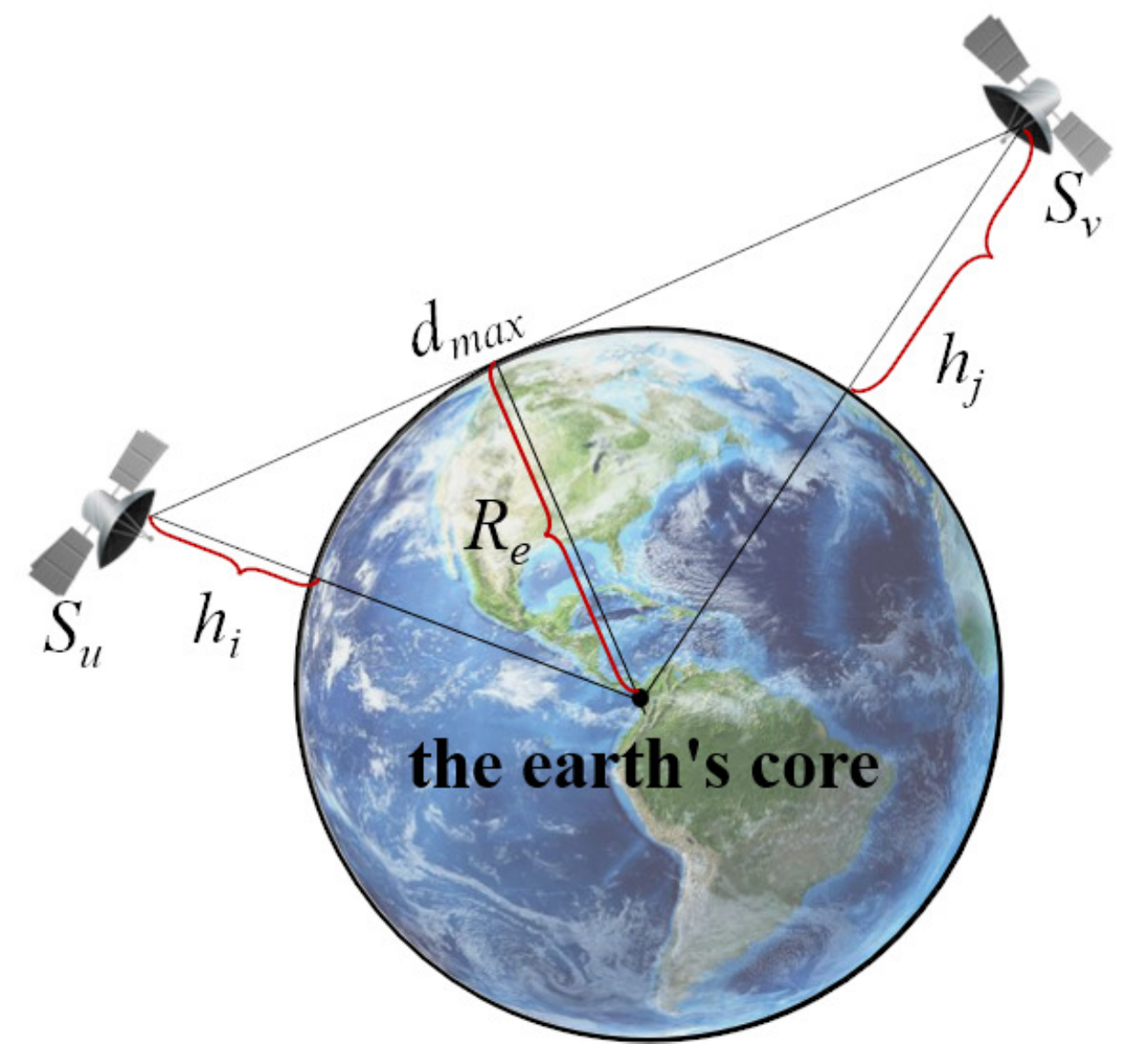}
  \caption{The visibility of two satellites.}
  \label{fig_4}
\end{figure}

\section{ILC Deployment Problem Formulation}

In this section, we first 

\subsection{Cost-Throughput Tradeoff Problem Formulation}

Throughput is a fundamental property of networks, which is deeply involved with network topology and traffic matrix\footnote{Our focus in this paper is on high-throughput network topology, of which network throughput is end-to-end throughput supported by the MLUDSN in a fluid-flow model with optimal routing. }. We denote $\mathcal{F}$ as a traffic matrix, where $c_{uv}^{ij}$ indicates an amount of requested flow passing through the ISL $e_{ij}$ from $u$ to $v$. The throughput of the MLUDSN with TM $\mathcal{F}$ is the maximum value $\mathcal{Q}$ for which $\mathcal{F}\cdot \mathcal{Q}$ is feasible in the MLUDSN, subject to the flow conservation constraints and the ISL capacity. Therefore, $\mathcal{Q}$ can be obtained by modeling maximum concurrent flow problems \cite{2014throughputPro}. Let $y_{ij}$ indicates whether there is a ISL between $v_i$ and $v_j$, and $\gamma_f$ indicates whether the $f$-th task flow is successfully completed or not. Thus we first formulate the problem as follows:
\begin{equation*}
  \begin{aligned}
      \mathbb{P}_1: &\quad\quad \max_{y_{ij}\in \mathbf{Y}, c_{uv}^{ij}\in \mathbf{C}} \min_{1\leq f \leq \vert \mathbf{F} \vert } \sum_{f} \gamma_f  \\
      s.t. & \sum_{i=1}^{\vert \mathbf{N} \vert} y_{ij}\cdot c_{ij}^{uv} = \sum_{i=1}^{\vert \mathbf{N} \vert} y_{ji}\cdot c_{ji}^{uv},  \forall f_{uv} \in \mathcal{F}, v_j \in \mathcal{V}, j\neq u,v,  \\
      & \sum_{i=1}^{\vert \mathbf{N} \vert} y_{uj} c_{uj}^{uv}  = \sum_{i=1}^{\vert \mathbf{N} \vert} y_{iv} c_{iv}^{uv} = r_{uv}\cdot \gamma_f, \forall f_{uv}\in \mathcal{F}, \\
      & \sum_{f_{uv}\in \mathcal{F}} y_{ij} \cdot c_{ij}^{uv}\cdot \gamma_f \leq C_{ij}, \forall e_{ij}\in \mathcal{E}, \\
      & \sum_{i=1}^{\vert \mathbf{N} \vert} y_{ij}\leq \alpha_{max}, \forall v_j \in \mathcal{V}. 
  \end{aligned}
\end{equation*}
Our objective is to maximize the minimum throughput of any requested end-to-end flow. The first constraint is the flow conservation for the intermediate relay satellites, and the second constraint is the flow conservation for the origin and the destination of the $f$-th flow, where $r_{uv}$ represents the volume of the flow. The third constraint depicts that the aggregate data of the flow through the ISL should not exceed the maximum amount of task data that can be transmitted by the ISL. The last constraint illustrates that each satellite can build $\alpha_{max}$ links to connect with other satellites.

Obviously, the problem is intractable with the explosive satellite scale, not to mention multi-slots MLUDSN's throughput. Therefore, we design a heuristic model to approximate the optimal throughput. 



Considering the tandem feature of the MLUDSN, we exemplify two layers to formulate throughput problem. We introduce a binary variable $y_{i,j}$ to indicate whether there exists an ILC between satellite $i$ and $j$, where $i$ and $j$ belong to different layers.
Thus, the objective function is presented as 
\begin{equation}
  \mathbb{Q}_1 = \min \left\lVert  \text{argmax}_{\mathbf{Y}} \frac{\sum_{i\in N_1}\sum_{j\in N_2} y_{i,j}\cdot \mathcal{R}_{i,j}}{\mathcal{P}_{N_1,N_2}} \right\rVert,
\end{equation}
where $\mathbf{Y}$ is the set of $y_{i,j}$,  $\mathcal{R}_{i,j}$ means the transmission rate between $i$ and $j$, and $\mathcal{P}_{N_1,N_2}$ indicates the total average path length of the two-layer MLUDSN, and $N_1$, $N_2$ are the number of satellites in the corresponding layer, respectively.

The objective for us is to maximize the total throughput with the minimum number of ILCs, and the conditions are listed as follows:

1) Each satellite in each layer only has one SCT \cite{9219130} that can connect with the satellite in other layer.

2) The physical constraints: visibility and co-motion feature.

\begin{figure}
  \centering
  \subfloat[]{\label{fig_7a}\includegraphics[scale=0.13]{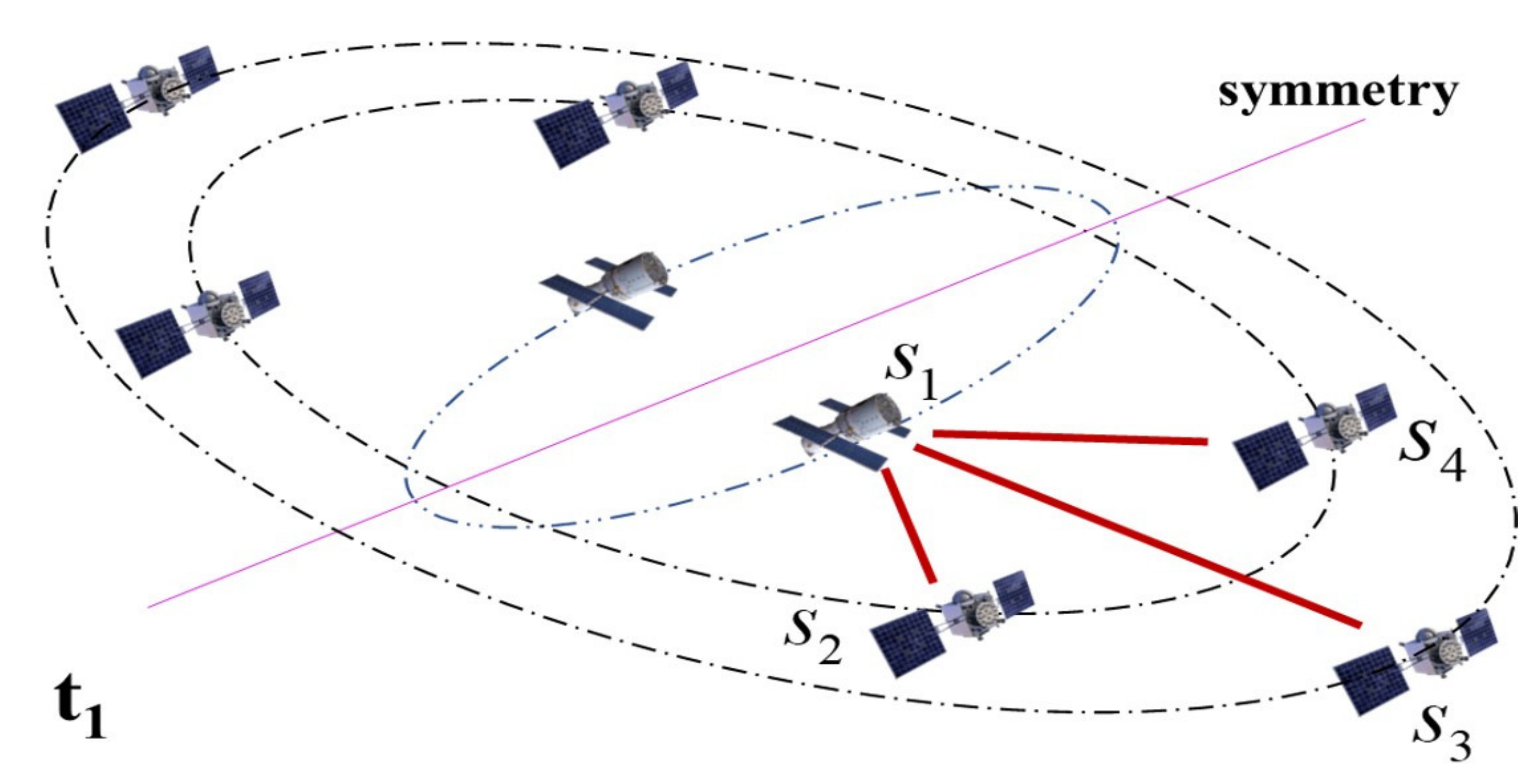}} 
  \\ \vspace{-1em}
  \subfloat[]{\label{fig_7b}\includegraphics[scale=0.18]{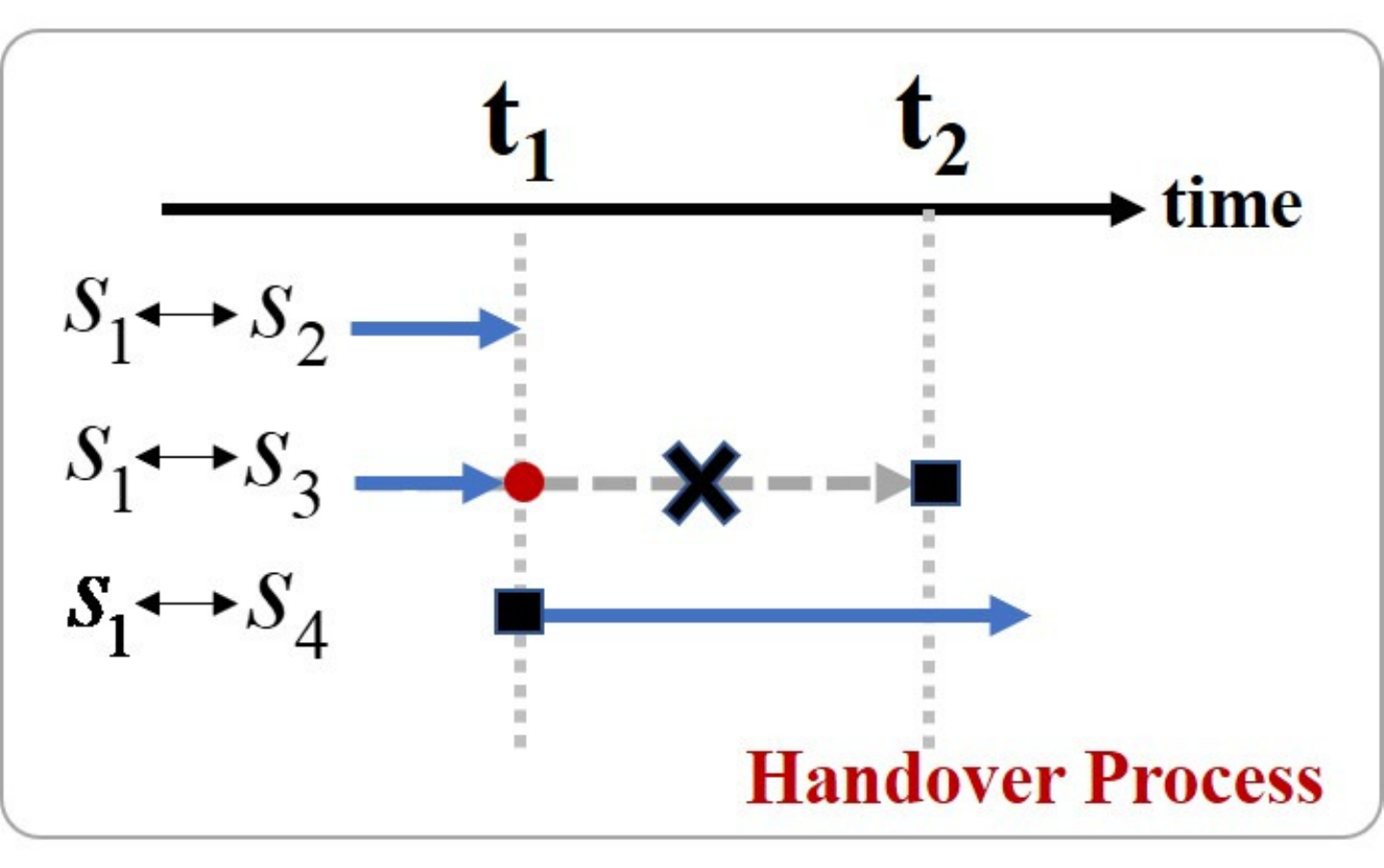}}
  \caption{(a) the symmetry of topology; (b) the handover optimization based on the time metric.}
\end{figure}

Through the aforementioned analysis of throughput, we can observe that throughput is inversely proportional to the total APL. Thus, we regard the APL as the proxy of throughput, which is the product of the number of ILCs and the total APL:
\begin{equation}
  \begin{aligned}
    \mathbb{Q}_2=&\min \sum_{i \in N_1}\sum_{j\in N_2} y_{i,j} \cdot \quad \mathcal{P}_{N_1,N_2}, \\
    s.t.  & \quad 0\leq \sum_{i\in N_1} y_{i,j} \leq 1, \forall j={1,2,\dots,N_2}, \\
    & \quad 0\leq \sum_{j\in N_2} y_{i,j} \leq 1, \forall i={1,2,\dots,N_1}, \\ 
    & \quad d_{i,j} \leq d_{max}, \forall i=1,2,...,N_1, j=1,2,...,N_2,\\
    & \quad \Delta\theta_i \times \Delta\theta_j  >0, \forall i=1,2,...,N_1, j=1,2,...,N_2,
  \end{aligned}
\end{equation} 
where the first constraint and second constraints correspond to the aforementioned first condition,the thrid and fourth constraints correspond to the second condition.

The proposed problem is obviously an ILP problem, which only considers a snapshot, but it is already NP-hard. Specifically, in one snapshot, the complexity of the ILP problem is deeply related to the number of binary variables and the corresponding constraints. Thus, the number of binary variables is $U_1=N_1\cdot N_2 +N_1+N_2$, and the number of constraints is $U_2=N_1+N_2$. Totally, the complexity of this problem is $U_1\cdot U_2 \cdot 2^{U_1}$. With the increase in network scale, its executive time is unacceptable. Besides, acquiring the APL also needs an exhaustive search for the topology after adding ILCs. Motivated by these factors, we first consider how to reduce the number of variables and constraints based on the inherent features of ILCs and the symmetry of topology over the long duration. Then we exploit the change of degree distribution of total topology to derive approximate APL to further reduce the computational complexity. 

\subsection{Time Metric of Inter-layer Connections}
Based on the aforementioned analysis of the problem, the core challenge is how to narrow the feasible region of the proposed problem and pursue a stable topology over a long duration. Therefore, we weigh ILCs in terms of time metric. As shown in Fig. \ref{fig_7a}, the visible matrix of ILCs is almost symmetrical due to the occlusion of the Earth. We consider only half satellites of each orbit in layer $i$ is possible to connect with half of the satellites of each orbit in layer $j$, i.e., the possibly feasible number of satellites that can deploy ILCs are $P_i\cdot S_i/2$ and $P_j\cdot S_j/2$, respectively. 

In Fig. \ref{fig_7b}, we further exploit the stability of total topology to reduce the number of handovers and realize better routing performance. Specifically, we not only consider the total remaining duration of each ILC but also take account of the residual time as the time evolves \cite{huang2022multi}. 
Assuming $S_1$ is possible to connect with $S_2,S_3$ simultaneously before $t_1$. At time slot $t_2$, $S_1$ would switch from $S_2$ to $S_4$ and $S_1$ is invisible to $S_3$. Based on the total remaining duration and residual time, the number of handovers can be decreased. In short, the remaining time normalized weight of each ILC can be formulated as
\begin{equation}
  \mathcal{T}_{y_{i,j}}(t)= \frac{\Delta T_{y_{i,j}}(t)}{\max{\Delta T}}\cdot \frac{\Delta r_{y_{i,j}}(t)}{\max{\Delta r}},
  \label{time_con}
\end{equation}

where $\Delta T_{y_{i,j}}(t)$ means the total remaining time of ILC $y_{i,j}$ at the time $t$, and $\Delta r_{y_{i,j}}(t)$ indicates the residual time of ILC $y_{i,j}$ at the time $t$. According to the weighted links, we can further set the adjustable threshold $\eta_1$ and $\eta_2$ to restrain the feasible region:
\begin{equation}
  \eta_1 \leq \mathcal{T}_{y_{i,j}}(t)\leq \eta_2, \forall y_{i,j}\in \mathbf{Y}, t\in [0,1,...,\tau].
  \label{time} 
\end{equation}

The existence of lower bound $\eta_1$ illustrates that if the remaining time of the ILC is too short, the connection may disappear directly or the number of handovers significantly increases. Meanwhile, the upper bound $\eta_2$ means that if the distance of the ILC is large, the number of handovers may decrease, but the link building imposes power consumption. Totally, last two constraint of the original problem and Eq. \eqref{time} can construct a feasible set in advance, denoted as $\mathbf{F}$, and $\mathbb{Q}_2$ is further solved in this feasible set. Totally, the problem is reformulated as 
\begin{equation*}
  \begin{aligned}
    \mathbb{Q}_3=&\min \sum_{i \in N_1}\sum_{j\in N_2} y_{i,j} \cdot \quad \mathcal{P}_{N_1,N_2}, \\
    s.t.  & \quad 0\leq \sum_{i\in N_1} y_{i,j} \leq 1, \forall j \in \mathbf{F}, \\
    & \quad 0\leq \sum_{j\in N_2} y_{i,j} \leq 1, \forall i \in \mathbf{F}, \\ 
    & \max\sum_{i}\sum_{j}\mathcal{T}_{y_{i,j}}(t),\forall i,j \in \mathbf{F}, t\in T
  \end{aligned}
\end{equation*} 
Note that $i$ and $j$ are from different layers.

\subsection{Average Path Length Approximation}
From the perspective of graph models \cite{liu2015analytical}, the average path length of the MLUDSN equates to the average distance of arbitrary two satellites. In this regard, the $1^{st}$ moment and the $2^{rd}$ moment of degree distribution would affect the distance between two randomly chosen satellites \cite{newman2003random,dorogovtsev2008organization}. Specifically, we first exemplify two-layer MLUDSN to exhibit the derivative equations. The adjacency matrix is 
\begin{equation}
  \mathcal{M}^{*}=
  \begin{pmatrix}
    \mathcal{A}_1  & \mathcal{A}_{1,2} \\
    \mathcal{A}_{2,1} & \mathcal{A}_2
  \end{pmatrix}.
\end{equation}

The degree distribution of layer 1 and layer 2 are denoted as $\Gamma_1(q)$ and $\Gamma_2(q)$, respectively. We first investigate the APL of intra-layer topology. The mean number of neighbours of a randomly chosen satellite with degree distribution $\Gamma_1(q)$ is $\mathcal{H}=\left\langle q \right\rangle $, i.e. the $1^{st}$ moment of degree distribution $\mathcal{H}=\sum_q q\cdot \Gamma_1(q)$. Moreover, the mean number of second neighbours of the satellite, i.e. the mean number of other satellites two hops away from the chosen satellite is formulated as 
\begin{equation}
  \begin{aligned}
      \mathcal{H}_2 = & \mathcal{H} \cdot \sum_q q\cdot \Gamma_1(q) \\
      = & \mathcal{H} \cdot \frac{\sum_q q\cdot (q+1)\cdot \Gamma_1(q+1)}{\sum_p p\cdot \Gamma_1(p)} \\
      = & \left\langle q^2 \right\rangle - \left\langle q \right\rangle,
  \end{aligned}
  \label{con-3}
\end{equation}

Here $\sum_q q\cdot \Gamma_1(q)$ is the average number of satellites two hops away from the chosen satellite via a particular one of its neighbours. By multiplying $\mathcal{H}$, we can further obtain $\mathcal{H}_2$. Then we extend to find the mean number of neighbours $x$ hops away from the chosen satellite based on the recursive relations. Thus the average number of neighbours with hops $x$ is
\begin{equation}
  \begin{aligned}
    \mathcal{H}_x &= \frac{\left\langle q^2 \right\rangle - \left\langle q \right\rangle}{\left\langle q \right\rangle} \cdot \mathcal{H}_{x-1} \\
    &=\frac{\mathcal{H}_2}{\mathcal{H}_1} \cdot \mathcal{H}_{x-1},
  \end{aligned}
  \label{rec_1}
\end{equation}
where $\mathcal{H}_1 \equiv  \mathcal{H}$. By iterating this formula, we can obtain the following equation:
\begin{equation}
  \mathcal{H}_x = \left( \frac{\mathcal{H}_2}{\mathcal{H}_1} \right)^{x-1} \cdot \mathcal{H}_1. 
  \label{hops}
\end{equation}
We can observe that Eq. \eqref{hops} would diverge or converge exponentially as $x$ increases, which directly depends on $\mathcal{H}_1<\mathcal{H}_2$ or $\mathcal{H}_1>\mathcal{H}_2$. If $\mathcal{H}_1<\mathcal{H}_2$, it means that the average total number of neighbors of the chosen satellite at all distances is infinite, and there must be a giant component, as shown in Fig. \ref{fig_5}. Note that Eq. \eqref{con-3} can be rearranged as $\left\langle q^2 \right\rangle - \left\langle q \right\rangle =\left\langle q \right\rangle$ when $\mathcal{H}_1=\mathcal{H}_2$, it also indicates that deleting satellites of $q=2$ (i.e., the degree of the satellite is 2) would not change the topological structure of the network, since these satellites fall in the middle of links between other pairs of satellites in layer 1.

The satellites in the giant component, can connect with others by some paths. In fact, Eq. \eqref{hops} also illustrates that the mean number of satellites hops $x$ away from a chosen satellite is based on the giant component. As $x$ equates to the total number of satellites in layer 1 (i.e., $N_1$), $x$ is also the radius of layer 1. Therefore, when $\mathcal{H}_{\overline{\mathcal{D}_1}}\simeq N_1 $, the average satellite-to-satellite distance $\overline{\mathcal{D}_1}$ in layer 1 can be modeled as
\begin{equation}
  \overline{\mathcal{D}_1} = \frac{\ln(N_1/\mathcal{H}_1)}{\ln(\mathcal{H}_2/\mathcal{H}_1)}+1.
  \label{monolayer-1}
\end{equation}
In other words,  $\overline{\mathcal{D}_1}$ can be approximately equals to the APL of layer 1. Similarly, the average satellite-to-satellite distance $\overline{\mathcal{D}_2}$ also depends on the $1^{st}$ and $2^{nd}$ moments of the degree distribution of layer 2.

\begin{figure}
  \centering
  \includegraphics[width=1.2 in]{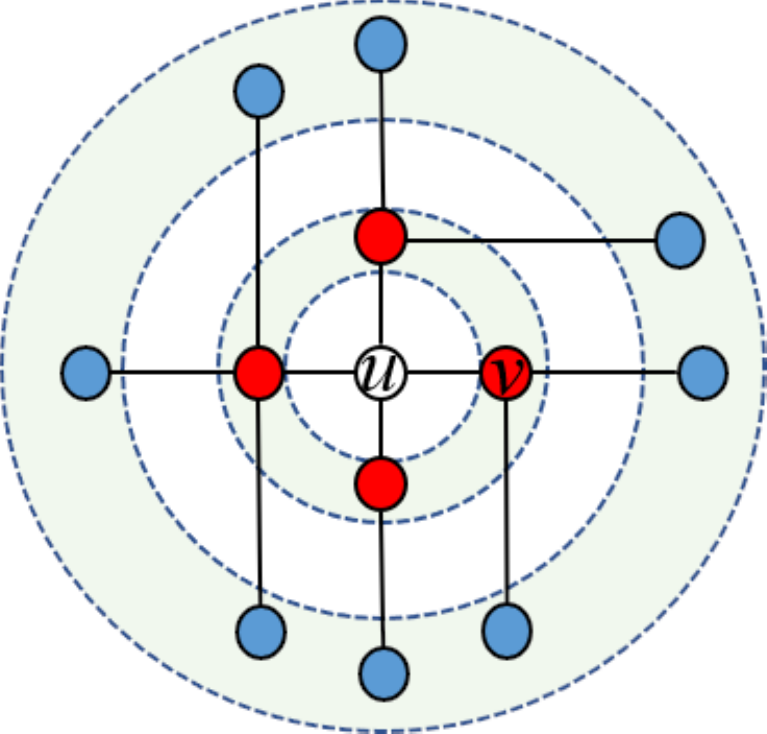}
  \caption{The first (red) and second (blue) connected components of ISL $e_{uv}$, which are tree-like.}
  \label{fig_5}
\end{figure}

Different from the analysis of the APL in the intra-layer topology, the average total distance of the network should consider the inter-degree $p$ and intra-degree $q$ and their joint distribution to derive the total APL. We assume that there are $k$ ILCs without repeated vertices, and $k$ is much smaller than the total number of ISLs in the MLUDSN. Based on the aforementioned analysis, the distribution of component sizes is deeply involved with the total APL. Then the randomly chosen satellite pairs should be classified into four types: $(u,v), (c,d), (u,c)$, and $(v,d)$, where $u,v$ belong to layer 1, and $c,d$ belong to layer 2. We can also obtain the recurrent relationship between $\mathcal{H}_x$ and $\mathcal{H}_{x-1}$ of different satellite pairs. Let $\mathbf{H}$ is denoted as a four-dimensional vectors, thus we first model the case where the component is in layer 1:
\begin{equation}
  \mathbf{H}_x^{1} = \left( \overline{\mathcal{H}}_x^{111}, \overline{\mathcal{H}}_x^{121}, \overline{\mathcal{H}}_x^{112}, \overline{\mathcal{H}}_x^{122}
  \right) 
  \label{H1}
\end{equation}
where the first superscript $\overline{\mathcal{H}}_x^{111}$ indicates whether the component exist in layer 1 or layer 2, the second superscript means whether the start satellite exist in layer 1 or layer 2, and the third superscript means whether the end satellite exist in layer 1 or layer 2. Then we also have the formulation that component is in layer 2:
\begin{equation}
  \mathbf{H}_x^{2} = \left( \overline{\mathcal{H}}_x^{211}, \overline{\mathcal{H}}_x^{221}, \overline{\mathcal{H}}_x^{212}, \overline{\mathcal{H}}_x^{222}
  \right) 
  \label{H2}
\end{equation}

By rearranging these two formulas, we can finally acquire the recursive relation:
\begin{equation}
  \begin{aligned}
      \mathbf{H}_x^1 &= \chi \cdot \mathbf{H}_{x-1}^1 + \mathbf{h}_1,\\
      \mathbf{H}_x^2 &= \chi \cdot \mathbf{H}_{x-1}^2 + \mathbf{h}_2,\\
  \end{aligned}
\end{equation}
where $\mathbf{h}_1=\left(1,1,0,0\right) $ and $\mathbf{h}_2=\left(0,0,1,1\right)$, $\chi$ is a $4\times 4$ matrix, obtained by the generation function of degree distribution. In this regard, we introduce the generation function of the joint degree distribution of layer 1 and layer 2.
Then we can further gain the corresponding distribution functions of the four pairs (i.e., inter-layer and intra-layer), to be the content of $\chi$. The derivation is detailed in Appendix A.
Therefore, we finally obtain the four average sizes of components $x$ hops away from the randomly chosen satellite.
\begin{equation}
  \mathbf{H}_x = A \cdot \chi_1 + B \cdot \chi_2,
\end{equation}
where $A$ and $B$ are the independent coefficients. This equation indicates that the average sizes of the component $x$ hops away from the root satellite, turns out to be linear combinations of $\chi$, and it can be regarded as "mean branch", which depends on the $1^{st}$ and $2^{nd}$ moment of degree distribution of each layer. When $\chi_2 < \chi_1$, it means that the radius of the satellite is smaller than layer 1 and layer 2, and thus the following equation is vaild derived from Eq. \eqref{monolayer-1}:
\begin{equation}
  \overline{\mathcal{D}_{1,2}} = \overline{\mathcal{D}_1} + \frac{\ln(N_2/k)}{\ln \chi_2} + 1,
\end{equation}
where $\overline{\mathcal{D}_{1,2}}$ indicates the average distance between a satellite from layer 1 and a satellite from layer 2. $\chi_2$ the ratio of the $2^{nd}$ moment to the $1^{st}$ moment of degree distribution of layer 2.
Additionally, we can further formulate the total APL of the entire MLUDSN is given as
\begin{equation}
  \mathcal{P}_{N_1,N_2} \simeq \frac{\overline{\mathcal{D}_1}\cdot \mathcal{E}_1 + \overline{\mathcal{D}_2} \cdot \mathcal{E}_2 + \overline{\mathcal{D}_{1,2}} \cdot N_1N_2}{\mathcal{E}_{1,2}},
  \label{apl_1}
\end{equation}
where $\mathcal{E}_{1} = \frac{N_1(N_1-1)}{2}$ is the intra-layer links of layer 1 and  $\mathcal{E}_{2}=\frac{N_2(N_2-1)}{2}$ is the  the intra-layer links of layer 2, respectively. Then $\mathcal{E}_{1,2}=\frac{(N_1+N_2)(N_1+N_2-1)}{2}$ is the total ISLs of layer 1 and layer 2.

Similarly, we further investigate the average distance of the entire network across multiple tandem layers\footnote{Considering the difficulty of establishing cross-layer links, in this paper we only consider the tandem layers and their topologies are isomorphic.}.
Therefore, the total average hop is given as 
\begin{equation}
  \begin{aligned}
    \mathcal{P}_{total} \simeq  \frac{\sum_{i=1}^l \overline{\mathcal{D}}_i \cdot \xi_i + \sum_{i=1}^{l-1}\overline{\mathcal{P}}_{N_i,N_{i+1}}\cdot N_i\cdot N_{i+1}}{\mathcal{E}_{total}}
  \end{aligned}
  \label{apl_2}
\end{equation}
where $\mathcal{E}_{total}=\frac{\sum_{i=1}^l N_i \cdot (\sum_{i=1}^l N_i-1)}{2}$. 

\begin{remark}
  The average path length of the entire MLUDSN decreases as the number of ILCs increases. 
  \label{remark}
\end{remark}
\begin{proof}
  Since $\overline{\mathcal{D}}_1$ and $\overline{\mathcal{D}}_2$ only depend on the degree distribution of respective layer, Eq. \eqref{apl_1} infers that the change of total APL lies on $\ln(N_2/k)$, thus we can simplify the equation as
  \begin{equation}
    \mathcal{P}_{N_1,N_2} \propto \ln(\frac{N_2}{k}),
  \end{equation}
  Although $\overline{\mathcal{D}}_1$ and $\overline{\mathcal{D}}_2$ would change as the degree distribution change, the changes are minimal based on the ratio of the $2^{nd}$ moment to the $1^{st}$ moment. Therefore, the APL would decrease as the number of ILCs increases. Then we can also obtain the similar deduction for the total APL of the MLUDSN derived from Eq. \eqref{apl_2}:
  \begin{equation}
    \mathcal{P}_{total}  \propto \{  \ln(\frac{N_2}{k_{1,2}}), \ln(\frac{N_3}{k_{2,3}})...,  \ln(\frac{N_l}{k_{l-1,l}}) \},
  \end{equation}
where $k_{i-1,i}$ indicates the number of ILCs between the $i-1$-th layer and the $i$-th layer. Consequently, the trend of the total APL of the MLUDSN decays logarithmically.
\end{proof}

\section{Two-Phase Deployment Strategy}
In this section, we propose a two-phase inter-layer connection deployment algorithm to decouple the large-scale ILP problem into the two-phase problem, which can significantly reduce the computational complexity of the proposed problem. Specifically, we first construct the feasible set of ILCs based on the physical constraints and the weight constraint, to narrow the feasible region of the problem. Through the narrowed feasible region, we further exploit the genetic method to further select proper ILC sets. Then we design the maximal time weight matching algorithm to search for the most stable topology. By iterating the above operations, we finally can obtain the minimal APL with the minimal number of ILCs.

\subsection{Two-phase Inter-Layer Connection Deployment Algorithm} 
Before building the ILCs, the topology of each layer may be determined through four permanent ISLs of each satellite. Many studies illustrate that the satellite topology of a layer is suitable to be constructed as a grid+ topology \cite{9322362,Kuiper2019}. Hence we assume that the intra-layer links have been allocated in this paper. Based on the aforementioned analysis of Section III, the proposed problem $\mathbb{Q}_2$ is still an ILP problem \cite{Floudas2009,guo2016exploiting}, the optimal solution can be acquired by an exhaustive search in the small-scale MLUDSN. However, since satellites deploy densely, the computational complexity has exploded. Thus, we propose a two-phase inter-layer connection deployment (TPILCD) algorithm to further narrow the feasible ILC sets and obtain the optimal matching solutions of multiple candidate ILC sets in parallel, fianlly acquire the optimal ILC deployment strategy after several iterations of convergence.

Specifically, we first construct the physically feasible set, which satisfies that the distance between any two satellites is not greater than the maximum visible distance, and they move in the same direction. We also set the lower threshold and upper threshold of time weight to further filter out ILCs of extremely short or long duration. Then, in the first phase, we first combine the genetic method to propose an optimization of topological layer combinations (OTLC) algorithm, which guarantees the diversity of algorithm results, provides a wider search region for local search, and raises the search breadth of the algorithm. To pursue a stable topology with fewer handover counts of ILCs, we further match ILCs of each set,  with the goal of maximizing the sum of the time weight. Finally, through these operations, we can calculate the total APL with different ILC set, and obtain the minimal $\mathbb{Q}_2$ by several iterations. Notably, the analytical model of network average hop can reduce the complexity of traversing each satellite in a multi-layer disordered network.

\begin{algorithm}[htb]
  \caption{TPILCD Algorithm.}
  \label{alg0}
  \hspace*{0.02in}{\bf Input:} 
  The longtitude, latitude and altitude of  satellites of each layer over a duration.
  \\
  \hspace*{0.02in}{\bf Output:} 
  The optimal ILC deployment strategy and total APL.

  \begin{algorithmic}[1] 
    \STATE Construct the set of buildable ILCs based on Eq. \eqref{con-1}, Eq. \eqref{con-2}, and Eq. \eqref{time}.
    \STATE Randomly Select $\mathcal{J}$ matching sets that each matching set has $k$ ILCs. 
    \STATE \textbf{Repeat}
    \STATE Using OTLC algorithm to operate crossover or mutation for each matching set.
    \STATE Using MTWM algorithm to match the set with maximizing time weight $\mathcal{T}$.
    \STATE \textbf{Until} Reach the maximal iteration counts and the objective converges.
\RETURN  The ILC deployment strategy.
  \end{algorithmic}
\end{algorithm}

\subsection{Optimization of Topological Layer Combination}
This algorithm concludes with six main steps: initialization; clone ILC set; crossover; mutation; matching; ILC set selection. Specifically, we randomly choose $k/2$ ILCs between layer 1 and layer 2, and these links all locate on the same surface of the Earth \footnote{For example, based on the Eastern and Western hemispheres of the Earth, it is only necessary to consider the ILC matching matrix of satellites deployed on the Eastern hemisphere.}.
The selected ILCs are as one ILC set. In this way, we construct $\mathcal{S}$ ILC sets (Lines 1-4). Then we initialize the number of iterations and initial APL (Line 5). To expand the searching space, $\mathcal{CS}$ ILC sets are cloned (Line 7).

The crossover operation can be triggered with probability $P_c$ when traversing each ILC set. Then we randomly select $p_j^b$ as the intersecting ILC set with $p_j^a$. We randomly select the cross position in the ILC set, and cross ILCs before the cross position. Other ILCs remain unchanged. Different from the crossover operation that occurs between multiple ILC sets, the mutation operation carries out within an ILC set. This operation may be triggered with probability $1-P_c$. In this operation, we reshuffle the satellites of layer 1 and layer 2 in a randomly selected ILC set. Moreover, this ILC set constructs new ILCs based on the newly sorted satellites and put them into the new ILC set. Note that not all new ILCs can be established. Due to the randomness of sorting, the construction of some ILCs may not satisfy the time weight constraint. Thus, the newly generated ILCs should be determined whether they can be constructed practically. If not, the crossover or mutation operation should be restarted until all ILCs satisfy the constraints (Lines 9).


Considering the limited antenna resources in each satellite, we count the degree of the satellites $deg_{u,v}$ that constitute the ILC and the average degree of their neighbor satellites $deg_{u,v}^{neig}$ by traversing each ILC set. Based on the degree of prior knowledge, we can construct a new ILC, and randomly delete an old ILC that is not composed of hub satellites. Thus, the total APL can be calculated by adjacency matrixes and ILCs. Meanwhile, when the obtained APL decreases, the new generated ILC is stored in the set, otherwise the old ILC is retained (Line 10).  

To pursue the topology stability over a long duration, we further consider that how to reduce the number of handovers, aiming at maintaining more stable connections between layers, and reducing the cost of handovers and the complexity of routing scheme. Based on Eq. \eqref{time_con}, we aim to  update the connections of satellites and to maximize the sum of weight of the candidate set (Line 11). Meanwhile, we extend $\mathcal{S}$ as $\widehat{\mathcal{S}}$ based on the symmetric topology and acquire the candidate set.
Derived from the matching matrix $\mathcal{A}_{1,2}$ constituted by the candidate set, we can calculate the APL from different sets, and the set with the minimal APL is regarded as the next iteration set (Lines 12-13). Then we totally select $\mathcal{S}$ sets to constitute $\mathcal{S}^{*}$. The iteration terminates until  the APL convergence (Lines 15-20).

\subsection{Maximizing Time Weight Matching}
Since the feasible region is further narrowed by operations such as crossover and mutation, we consider a candidate set consisting of $k$ ILCs, and the objective is to find the optimal matching matrix with the maximal sum of the time weight, which is modeled as follows:
\begin{equation}
\begin{aligned}
  \mathbb{Q}_4: y^{*} &=\text{argmax} \sum_{i=1}^k \sum_{j=1}^k \mathcal{T}_{y_{i,j}}(t), \quad\forall t\in T,\\
s.t. & \quad \sum_{i=1}^k y_{i,j} \leq 1, \forall j=1,2,...,k, \\
    &  \quad \sum_{j=1}^k y_{i,j} \leq 1, \forall i=1,2,...,k, 
\end{aligned}
\end{equation}
where $k$ is the number of ILCs. To solve this subproblem, we build a bipartite graph, where the vertices in the left are from layer 1, and the vertices in the right are from layer 2. We first initialize each satellite both in the left and right by the topmark, denoted as $LA(i)$ and $LB(j)$, respectively (Line 1). Then the subproblem is transformed into finding perfect matching problem, which should satisfy the following condition for arbitrary a link $e_{ij}$:
\begin{equation}
  LA(i)+LB(j)\geq\mathcal{T}_{y_{i,j}}(t), \forall i,j \in [1,..,k], t\in T.
\end{equation} 
By finding the augmentation path and modifying the topmark (Lines 2-7), we can finally provide a perfect matching matrix to compute the APL of total network.


\subsection{Complexity Analysis}
Considering the OTLC algorithm is the core of the TPILCD algorithm, we first analyze the complexity of its main steps. We focus on the crossover, mutation, link resource balancing, and ILC selection, since they deeply influence the number of iterations. Specifically, the complexity of crossover operation, mutation operation, link resource balancing, and ILC selection are $\mathcal{O}(\mathcal{S}\cdot \mathcal{CS})$, $\mathcal{O}(k \cdot \mathcal{S}\cdot \mathcal{CS} )$, $\mathcal{O}(k\cdot \mathcal{S}\cdot \mathcal{CS} \cdot \Upsilon )$, and $\mathcal{O}(\mathcal{S}\cdot \mathcal{CS}  \cdot \Upsilon)$, respectively. Here $\mathcal{S}$ is the number of candidate ILC sets, $\mathcal{CS}$ is the number of clone sets, $k$ is the number of ILCs and $\Upsilon$ is the time consumption of computing the total APL of the entire MLUDSN. 

\begin{algorithm}
  \caption{OTLC Algorithm.}
  \label{alg1}
  \hspace*{0.02in}{\bf Input:} 
  Adjacency matrix of two-layer satellite network $\mathcal{A}_1$, $\mathcal{A}_2$; the number of ILCs $k$; the number of ILC sets $\mathcal{S}$; the number of clone sets $\mathcal{CS}$; crossover Probability $P_c$; mutation Probability $1-P_c$; Iteration Number $it_{max}$ \\
  \hspace*{0.02in}{\bf Output:} 
  The optimal ILC deployment strategy and the minimal APL
\begin{algorithmic}[1] 
  \STATE Based on the symmetry of the topology of each layer, we first set the half ILCs ($k/2$) that locate on the same surface of the Earth.
  \FOR{each ILC set $p_j\in \mathcal{S}$}
   \STATE Randomly generate ILC $e_{uv}$ with non-repeating satellites, $u\in \mathcal{A}_1, v \in \mathcal{A}_2$, then add $e_{uv}$ into $p_j$. 
   \ENDFOR
   \STATE Initialize $it=0$ and $\overline{\mathcal{P}}_{N_1,N_2}=\inf$
   \WHILE{$it<it_{max}$}
   \STATE Copy $p_j \to p_j^a$, and add $p_j^a$ into $\mathcal{S}^{*}$
   \FOR{ each ILC set $p_j^a\in \mathcal{S}^{*}$}
   \STATE Operate crossover or mutation by $P_c$.  
   \STATE Find the neighbor $p_j^{a'}$ based on the degree prior knowledge.
   \STATE \emph{Using MTWM algorithm} to obtain candidate ILC sets.
   \STATE Based on the logical index of satellties, the total ILC matching matrix $\mathcal{A}_{1,2}$ can be obtained.
   \STATE Calculate $\overline{\mathcal{P}}_{N_1,N_2}$ by Eq. \eqref{apl_1} and add optimal $p_j^{a}$ into next iteration ILC set $\mathcal{S}^{*}$.
   \ENDFOR
   \IF{$\overline{\mathcal{P}}_{N_1,N_2}$ is equal to last iteration $\overline{\mathcal{P}}_{N_1,N_2}$}
   \STATE $it=it+1$.
   \ELSE
   \STATE $\overline{\mathcal{P}}_{N_1,N_2}$ is the optimal solution and generate the candidate ILC set.
   \ENDIF
   \ENDWHILE
\end{algorithmic}
\end{algorithm}

If we adopt the all-to-all transit mode to obtain the shortest paths of all pairs of satellites, then $\Upsilon= \mathbf{N}^3$. The reason is that in addition to traversing all satellite pairs, each satellite needs to be traversed once to determine whether the satellite is a relay vertice. Therefore, the total complexity of the OTLC algorithm under the all-to-all transit mode is $\mathcal{O}(k \cdot \mathcal{S}\cdot \mathcal{CS} \cdot \mathbf{N}^3)$. Instead of the traversing method, the complexity of the proposed APL formulation is $\mathcal{O}(\mathbf{N})$ since the degree distribution of each satellite in the total network needs to be previous obtained, i.e., $\Upsilon= \mathbf{N}$. Thus, the total complexity of the OTLC algorithm is $\mathcal{O}(k \cdot \mathcal{S}\cdot \mathcal{CS} \cdot \mathbf{N} \cdot \Phi )$, where $\Phi$ is the time consumption of matching the candidate set during the OTLC algorithm.  
Thus, we further explore the complexity of the MTWM algorithm, which depends on the number of ILCs. Its computational complexity is $\mathcal{O}(k^3)$, i.e., $\Phi=k^3$. In short, the total complexity of TPILCD algorithm is $\mathcal{O}(k^4 \cdot \mathcal{S}\cdot \mathcal{CS} \cdot \mathbf{N}^3)$. Compared with the exponential complexity of the original problem, the complexity of the algorithm is reduced to polynomial.

\section{Simulation Analysis}

In this section, our simulations are conducted on several scenarios, which are based on a real satellite network by the satellite orbit parameters supplied by the satellite tool kit (STK). To compare the performance of the proposed algorithm and the ILP model which is feasible in the small-scale network, we first perform simulations in two-layer small-scale networks, which consist of the Globalstar constellation and the Celestri constellation. Then we focus on the interconnections of the Kuiper constellation, to investigate the impacts of network capability, and ILCs on different network performance indicators. Especially, we also simulate the SpaceX constellation to explore the combination of the layer with polar orbits and the layer with inclined orbits. Finally, we also introduce the global traffic to verify the performance of the proposed algorithm. The time horizon is from 2022-8-1 10:00:00 to 2022-8-1 12:00:00 with a duration of 60 seconds of a time slot. The specific simulation parameters are present in Table \ref{tab1}. All the simulations are performed on MATLAB and Pycharm in a PC with i5-11300H CPU 3.1GHz, 16GB RAM.

\begin{algorithm}
  \caption{MTWM Algorithm.}
  \label{alg2}
  \hspace*{0.02in}{\bf Input:} 
 The candidate ILC set, the corresponding time weight matrix.
  \\
  \hspace*{0.02in}{\bf Output:} 
  The ILC set with the maximal sum of time weight.

  \begin{algorithmic}[1] 
    \STATE \textbf{Initializion} Set the selected satellite topmark in layer 1 to the maximum link weight of all ILCs, and set the selected satellite topmark in Layer 2 to 0.
    \STATE \textbf{Repeat}
    \STATE Exploit the Hungarian algorithm to find the perfect matching set.
    \IF{A perfect matching set is not found}
    \STATE Modify the vaule of the feasible topmark.
    \ENDIF
    \STATE \textbf{Unitl} Obtain the perfect matching set of ILCs. 
\RETURN  The ILC deployment strategy.
  \end{algorithmic}
\end{algorithm}


\begin{table}[]
  \centering
  \caption{Simulation Parameters \cite{wang2007topological,9351765}} 
  \resizebox{\columnwidth}{!}{%
  \begin{tabular}{|l|l|l|l|l|}
  \hline
             & Planes & Sats/plane & Altitude(km) & Inclination($^\circ$) \\ \hline
  Globalstar & 8      & 6          & 1414         & 52             \\ \hline
  Celestri   & 7      & 9          & 1400         & 48             \\ \hline
  Kuiper-B/C     & 28/36  & 28/36      & 590/610      & 33/42          \\ \hline
  SpaceX-1/2/3    & 72/6/36   & 22/58/20      & 550/560/570      & 53/97.6/70        \\ \hline
  \end{tabular}%
  } 
  \label{tab1}
\end{table}

  

\begin{figure}
  \centering
  \subfloat[]{\label{s1_1}\includegraphics[width=2.7 in]{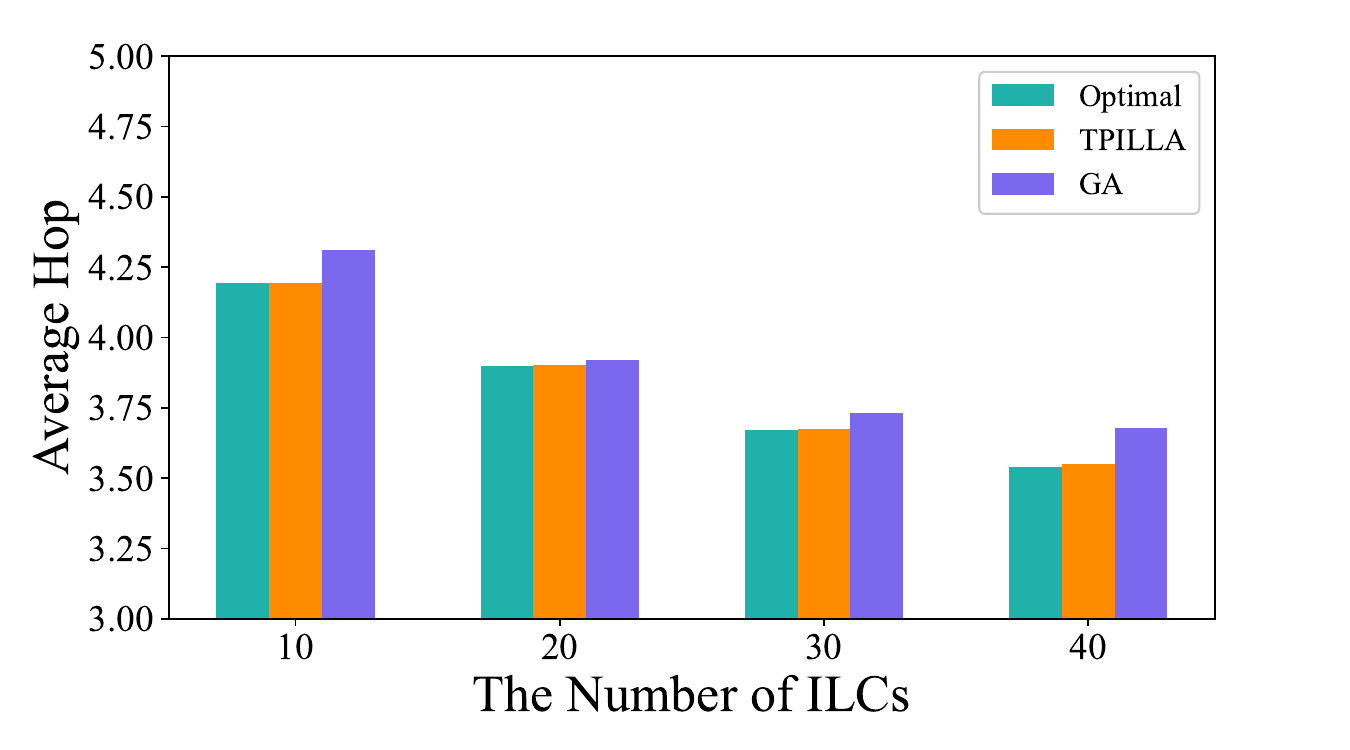}} 
  \\
  \vspace{-1em}
  \subfloat[]{\label{s1_2}\includegraphics[width=2.7 in]{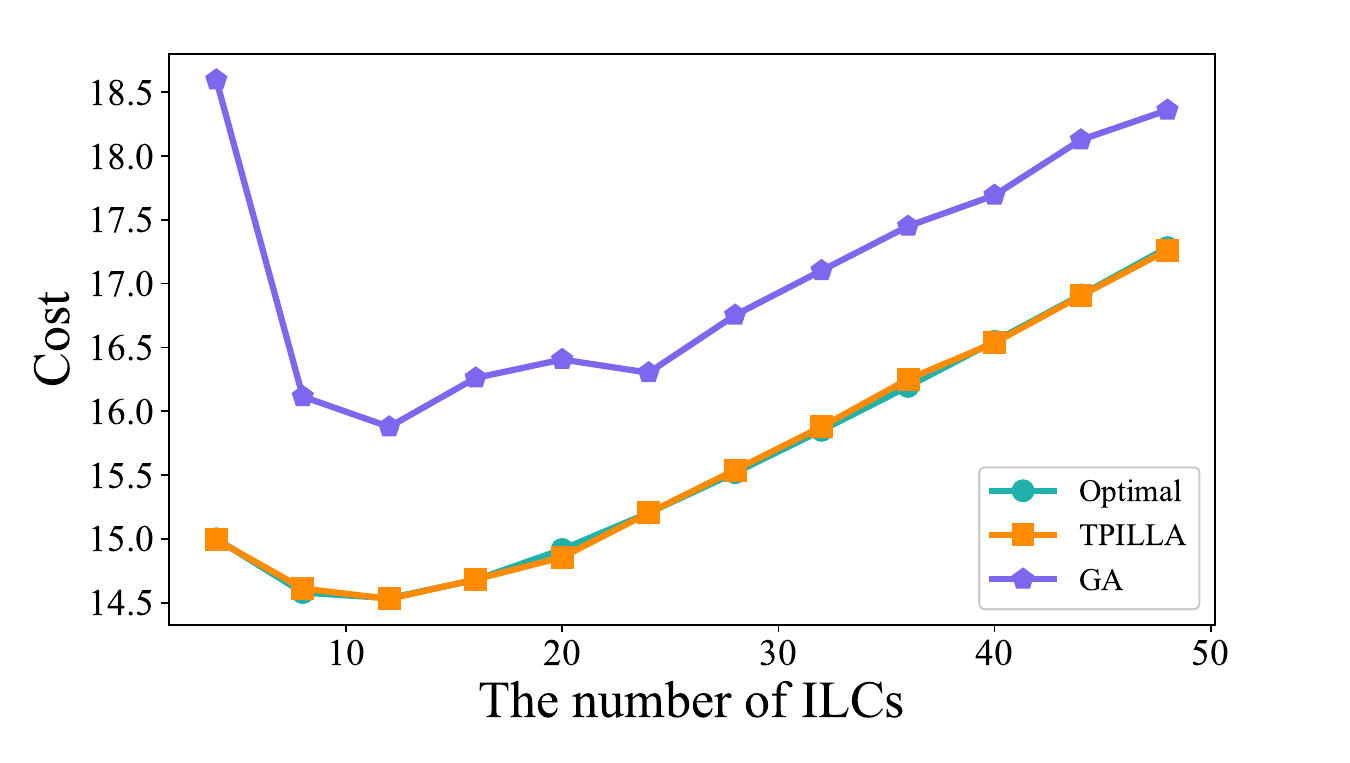}}
  \caption{Performance comparison among the optimal solution, TPILCD algorithm and greedy algorithm (GA). (a) Average hop (b) Cost.}
\end{figure}

\subsection{Performance Evaluation}

To verify the performance of the proposed algorithm, we first build ILCs between the Globalstar constellation and the Celestri constellation. The reason for choosing these two constellations is that the GlobalStar constellation has an even number of orbits and the number of satellites per orbit, while the Celestri constellation has an odd number of them, and ILCs between them are of generality. As shown in Fig. \ref{s1_1}, with the increase of ILCs, the three APLs obtained by the optimal solution, TPILCD algorithm, and greedy algorithm respectively both decay logarithmically. Here the optimal solution is acquired by traversing all the pairs of satellites by the CVX tool. We observe that the APL obtained by the TPILCD algorithm is almost equal to the optimal solution, and the error appears as the number of ILCs up to 40, which results from the rare inaccuracies of ILCs of the TPILCD algorithm and the approximate model of the total APL. Compared with the TPILCD algorithm, the greedy algorithm builds ILCs with the shortest distance and neglects the impacts of ILCs on the total network performance. In Fig. \ref{s1_2}, we mainly exhibit the distribution of satellites that build ILCs in the GlobalStar constellation, since the maximal number of ILCs is constrained by the GlobalStar constellation. 
The ILCs of the proposed algorithm is mainly concentrated in the middle zone of the grid topology, while the greedy algorithm is more concentrated in the margin. As a result, the cross-layer routing needs to traverse to the margin satellite first, which imposes the redundant path, resulting in a large APL.

\begin{figure}
  \centering
  \includegraphics[width=2.3 in]{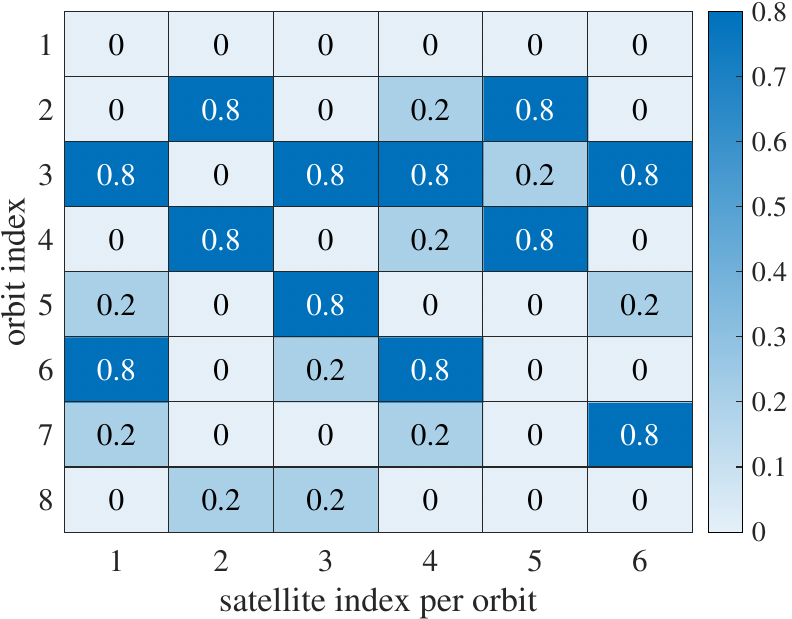}
  \caption{The distribution of satellites that build ILCs in the GlobalStar constellation (the weight=0.8 indicates the selected satellite from the TPILCD algorithm, and the weight=0.2 indicates the chosen satellite from the greedy algorithm, the weight=0 indicates there is no satellite to build the ILC). The difference in weights is only to distinguish the satellites that build the ILCs under different algorithms.}
  \label{s2}
\end{figure}

Furthermore, we compare the total cost among the aforementioned three algorithms. Since the objective is the product of the number of ILCs and the total APL. Considering that the former is likely to be far greater than the latter, we normalize the link counts into a range similar to the APL, as presented in Fig. \ref{s2}. It can be seen that blindly adding ILCs may impose the total cost sharply, and waste the abundant link resources. Meanwhile, the error between the cost of the proposed algorithm and the optimal solution is 1\%. Although the cost from the greedy algorithm is larger than the proposed algorithm, the trend of the cost curve also illustrates that the performance gain of the ILCs is limited, and the optimal number of ILCs is less than half the size of the GlobalStar (i.e. ILCs=24). Thus we can derive the following theorem based on the objective:

\begin{theorem}
  The number of ILCs should be less than half the size of a layer that has a smaller scale, i.e., $k<\frac{N_2}{2}$ if $N_1>N_2>\ln N_1+3$.
  \label{thm-1}
\end{theorem}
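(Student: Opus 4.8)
The plan is to reduce the deployment objective to a single-variable function of the ILC count $k$, locate its stationary point analytically, and then show that this point lies below $N_2/2$ in the regime $N_1 > N_2 > \ln N_1 + 3$. First I would write the cost in $\mathbb{Q}_2$ as $g(k) = k\cdot\mathcal{P}_{N_1,N_2}(k)$. By Remark~\ref{remark} and Eq.~\eqref{apl_1}, the only $k$-dependence of the total APL enters through the inter-layer term $\overline{\mathcal{D}_{1,2}}$, so I can collapse the APL into the clean form $\mathcal{P}_{N_1,N_2}(k) = A_0 + B_0\ln(N_2/k)$, where $B_0 = \tfrac{N_1 N_2}{\mathcal{E}_{1,2}\ln\chi_2} > 0$ is the coefficient of the logarithmic reduction term and $A_0 > 0$ gathers the $k$-independent contributions of $\overline{\mathcal{D}_1}$, $\overline{\mathcal{D}_2}$ and the edge counts $\mathcal{E}_1,\mathcal{E}_2$. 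This isolates exactly one degree of freedom.

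Next I would relax $k$ to a positive real and differentiate. Since $\mathcal{P}'_{N_1,N_2}(k) = -B_0/k$, we get $g'(k) = \mathcal{P}_{N_1,N_2}(k) + k\,\mathcal{P}'_{N_1,N_2}(k) = \mathcal{P}_{N_1,N_2}(k) - B_0$. Setting $g'(k^{*})=0$ yields $\mathcal{P}_{N_1,N_2}(k^{*}) = B_0$, i.e. $\ln(N_2/k^{*}) = 1 - A_0/B_0$, and hence $k^{*} = N_2\,e^{A_0/B_0 - 1}$. In the regime where the logarithmic reduction governs the trade-off (the offset $A_0/B_0$ being small), this degenerates to the clean value $k^{*} = N_2/e$. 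I would also check the second-order condition and the boundary behaviour on the feasible range $k\in\{1,\dots,N_2\}$ (the one-SCT-per-satellite cap), and round $k^{*}$ to the nearest integer, so that $k^{*}$ is the relevant extremum rather than an endpoint artefact.

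I would then compare $k^{*}$ with $N_2/2$. Because $1/e < 1/2$, the proportional form already delivers $k^{*} = N_2/e < N_2/2$; more generally $k^{*} < N_2/2 \iff A_0/B_0 < 1 - \ln 2$. The role of the hypothesis $N_1 > N_2 > \ln N_1 + 3$ is precisely to control this offset: it guarantees that the giant-component and logarithmic-scaling assumptions behind Eq.~\eqref{apl_1} hold (so that $\chi_2 > 1$, and the monolayer estimate $\overline{\mathcal{D}_1}$ of Eq.~\eqref{monolayer-1} scales like $\ln N_1$ plus a constant), and that the smaller layer is large enough relative to the effective layer-1 diameter $\sim \ln N_1$ for the $\ln(N_2/k)$ term to dominate the constant path-length offsets near the optimum.

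The hard part will be making this last step rigorous: bounding the $k$-independent constant $A_0/B_0$ (equivalently, showing that the additive offset in the APL does not push the stationary point past $N_2/2$) and proving that $N_1 > N_2 > \ln N_1 + 3$ is the sharp sufficient condition for it. This requires substituting the grid+ degree moments into $\chi_2$, $\overline{\mathcal{D}_1}$ and $\overline{\mathcal{D}_2}$ and carefully tracking the $+1$ offsets carried by the monolayer and inter-layer distance formulas; I expect the $\ln N_1$ and the additive $3$ in the threshold to emerge from $\overline{\mathcal{D}_1}\sim\ln N_1$ together with the accumulated unit offsets in $\overline{\mathcal{D}_1}$, $\overline{\mathcal{D}_2}$ and $\overline{\mathcal{D}_{1,2}}$.
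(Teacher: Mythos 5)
Your skeleton matches the paper's: both collapse the objective to a one-variable product of the form $k\cdot\left(A_0+B_0\ln(N_2/k)\right)$ and differentiate. (The paper takes $f_1=\overline{\mathcal{D}_{1,2}}=\overline{\mathcal{D}_1}+\ln(N_2/k)/\ln\chi_2+1$ together with a normalized linear cost $f_2=(\overline{\mathcal{D}_1}/N_2)\,k$, i.e.\ $A_0=\overline{\mathcal{D}_1}+1$ and $B_0=1/\ln\chi_2$; the normalization is a positive constant and does not move the optimizer.) The divergence, and the gap, lies in what you do with the first-order condition. Solving it exactly gives $k^{*}=N_2\,e^{A_0/B_0-1}$, and your route then requires $A_0/B_0<1-\ln 2\approx 0.31$. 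But with the paper's own ingredients $A_0/B_0=(\overline{\mathcal{D}_1}+1)\ln\chi_2$, and since $\overline{\mathcal{D}_1}\geq 1$ and $\chi_2$ sits well above $1$ for a grid+ layer, this ratio exceeds $1-\ln 2$ by a wide margin and is typically above $1$, so $k^{*}$ lands at or beyond $N_2$ — outside the feasible range — and never "degenerates to $N_2/e$." Using the full weighted APL of Eq.~\eqref{apl_1} instead of $\overline{\mathcal{D}_{1,2}}$ only hurts: there $B_0=N_1N_2/(\mathcal{E}_{1,2}\ln\chi_2)\leq 1/(2\ln\chi_2)$ while $A_0$ stays of the same order, so $A_0/B_0$ grows. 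The condition you would need is therefore never satisfied in the very regime the theorem addresses, and the comparison with $N_2/2$ cannot be closed along this route. A second, related failure is one your own checklist would expose: $g''(k)=-B_0/k<0$, so the interior critical point of $k\cdot(A_0+B_0\ln(N_2/k))$ is a \emph{maximum} of the cost, not a minimum; the relaxed minimizer sits at an endpoint, and "locate the stationary point and show it is below $N_2/2$" is the wrong program for a minimization under this model.

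The paper never solves the first-order condition exactly. It immediately loosens it with the inequality $\ln x\leq x-1$ to obtain the bound $k\leq N_2\ln\chi_2/(N_2-\overline{\mathcal{D}_1}\ln\chi_2)$, and then runs a backward deduction: assume this bound is below $N_2/2$, set $\chi_1=\chi_2=2$, and substitute $\overline{\mathcal{D}_1}$ from Eq.~\eqref{monolayer-1} with $\ln H_1\in[1,1.6]$, which rearranges to $N_2-\ln N_1>4-\ln H_1$ and hence to the stated threshold $N_2>\ln N_1+3$. That inequality step and the backward substitution are where the $\ln N_1+3$ actually comes from; you correctly guess its provenance (the $\overline{\mathcal{D}_1}\sim\ln N_1$ scaling plus accumulated unit offsets) but your plan contains no counterpart to the step that produces it, and the exact-stationary-point route you propose in its place does not reach the conclusion.
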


\begin{proof}
Assuming that $N_1>N_2$, i.e. $\overline{\mathcal{D}_1} > \overline{\mathcal{D}_2}$, and the maximal number of ILCs is $N_2$. Due to the normalization of ILCs, we assume the corresponding cost of ILCs is $\overline{\mathcal{D}_1}$, as shown in Fig. \ref{prf}. Then we can obtain two functions of the objective:
\begin{equation}
  \begin{aligned}
      f_1&=\overline{\mathcal{D}_1} + \frac{ln(N_2/k)}{ln\chi_2}+1, \\
      f_2&=\frac{\overline{\mathcal{D}_1}}{N_2}\cdot k.
  \end{aligned}
\end{equation}
Thus the objective is $f=f_1\cdot f_2$. Then we differentiate it to find a first order condition:
\begin{equation}
  f^{'} = \frac{\overline{\mathcal{D}_1}^2}{N_2}+\frac{\overline{\mathcal{D}_1}}{N_2}\cdot \frac{\ln(N_2/k)}{\ln \chi_2} + \frac{\overline{\mathcal{D}_1}}{N_2}-\frac{\overline{\mathcal{D}_1}}{\ln \chi_2}=0.
\end{equation}
Based on the inequation of $\ln x \leq x-1$, we have
\begin{equation}
  k \leq \frac{N_2\cdot \ln\chi_2}{N_2-\overline{\mathcal{D}_1}\cdot \ln\chi_2}.
\end{equation}
Considering the grid topological feature of each layer, the average degree is at [3,5], and $\chi_1<2$, $\chi_2<2$ absolutely. By exploiting the backward deduction, we first assume that $  k \leq \frac{N_2\cdot \ln\chi_2}{N_2-\overline{\mathcal{D}_1}\cdot \ln\chi_2}< \frac{N_2}{2}$ is permanent, then we set $\chi_1=\chi_2=2$, and the inequation is transformed as follows:
\begin{equation}
  N_2-\ln N_1 > 4-\ln H_1,
\end{equation}
where $H_1$ is the average degree of layer 1. Due to $\ln H_1 \in [1,1.6]$, and the corresponding condition can be obtained as $N_2>\ln N_1+3$. 

\end{proof}

\begin{figure}
  \centering
  \includegraphics[width=2.0 in]{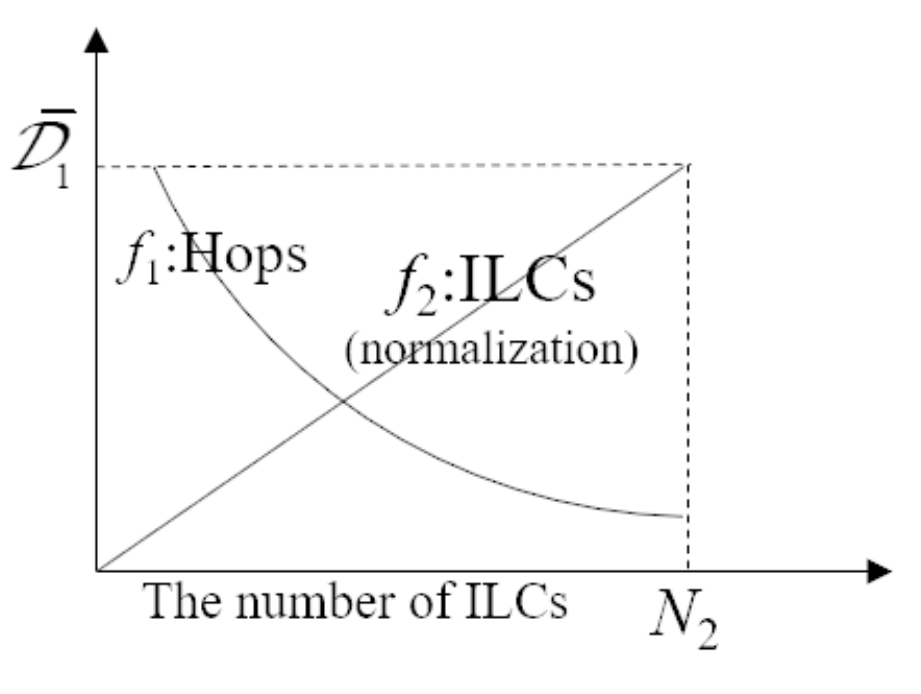}
  \caption{An illustration of two functions: $f_1$: the average hop, $f_2$: the normalized cost of ILCs.}
  \label{prf}
\end{figure}


Based on the origial problem $\mathbb{Q}_1$, we can further obtain throughput by the known topologies, matching matrix, and link rate.  We set the channel bandwidth as 20MHz and transmission power as 3.74w, thus the maximal link rate is calculated about 1 Gbps \cite{9327501}. By adopting an all-to-all traffic model, we sum all the link rates, and assess the maximum throughput (Throughput: GlobalStar-47Gbps; Celestri-55.8Gbps; Two-Layer-117.984Gbps). Compared with two independent layers, two-layer throughput can be augmented by 21.87\% with the added 12 ILCs. Even the greedy algorithm also enhances 11.58\% throughput. These two results also verify that the MLUDSN raises throughput more than multiple monolayer satellite networks by tolerating little extra overhead.

\subsection{Multi-Layer Collaboration Gain of Mega Constellations}

In this subsection, we further investigate the performance of the proposed algorithm and the impacts of network connection capability in mega constellations on network performance. Considering that the original problem with the exhaustive search is intractable as the huge network scale, we mainly compare the proposed algorithm with the greedy algorithm and random algorithm. Based on the detailed Kuiper and SpaceX constellations in Table \ref{tab1}, we focus on the combination of two layers with inclined orbits (Kuiper), and the combination of layers with hybrid orbits, i.e., polar orbits and inclined orbits (SpaceX). 
Fig. \ref{k1} firstly verifies that the total APL decays logarithmically with the increase of the number of ILCs. The difference between two constellations is that: $P$ is about 3.3 times larger than $S$ in SpaceX-1, and $P$ is about 9.7 times smaller than $S$ in SpaceX-2, which renders the dramatical decrease of the E2E hops by inter-layer paths. Even though the scale of the SpaceX is similar to the Kuiper ($\mathbf{N}_{spaceX}=1932,\mathbf{N}_{kuiper}=2080$), the gain of two-layer collaboration under the SpaceX topological structure is more outstanding.  Fig. \ref{k2} also presents a huge cost decrease in costs generated by SpaceX's multi-layer collaboration than the one of the Kuiper, and we infer that the optimal number of ILCs is 300 and 160 respectively. 




\begin{figure}
  \centering
  \includegraphics[width=2.7 in]{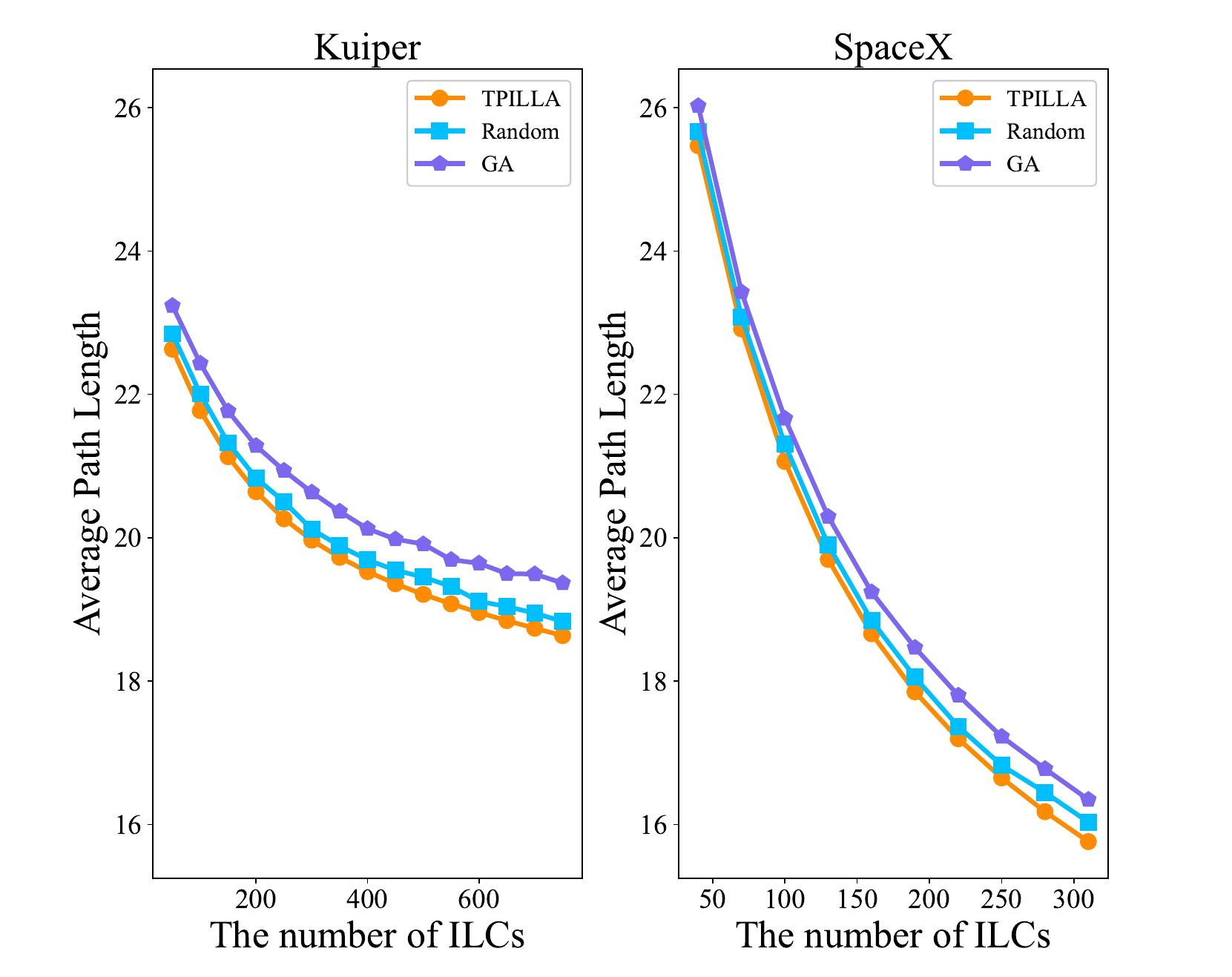}
  \caption{Total average hop comparison among three algorithms in Kuiper and SpaceX constellations.}
  \label{k1}
\end{figure}

Furthermore, since the greedy algorithm referentially selects the satellite pair with the shortest distance, these ILCs are concentrated in the margin of the grid+ topology, which increases the E2E hop count of the cross-layer satellite pairs and further imposes the total APL. Different from the greedy algorithm, the random algorithm randomly selects several satellites in each orbit in the feasible sets to build ILCs, which can deploy the chosen satellites distributionally, and obtain the approximate performance from that of the proposed algorithm. However, the error between them would be magnified after counting the total cost.
Derived from the APL obtained by each algorithm, the cost of the greedy algorithm is the largest, though it consumes minimal power based on the shortest distance\footnote{The power of the inter-satellite link is positively related to the square of the distance}. Physically, the proposed algorithm is 56.21\% (Kuiper) and 81.13\% (SpaceX) more cost-efficient than the greedy algorithm, as well as 18.1\% (Kuiper) and 41.9\% (SpaceX) than the random algorithm, after the normalization of these costs. 
As shown in Fig. \ref{t1}, we also explore the gain under multi-layer collaboration for throughput. Although the inter-layer communication resources are increased, total throughput still presents a trend of first growth and then stability, which illustrates that the cooperation gains are not a simple linear addition. Besides, we further investigate the impacts of the ILC deployment region on network indicators. Fig. \ref{t2} shows that under the same number of ILCs, with the expansion of the deployment region,distributed deployment can enhance the throughput by 8.1\% and 4.29\% with SpaceX and Kuiper, respectively, as well as reduce the hop by 3.39\% and 2.39\%, compared with centralized deployment. This is because there are more options for E2E paths instead of occupying the same links repeatedly, thus improving the link utilization and total throughput.


\begin{figure}
  \centering
  \includegraphics[width=2.7 in]{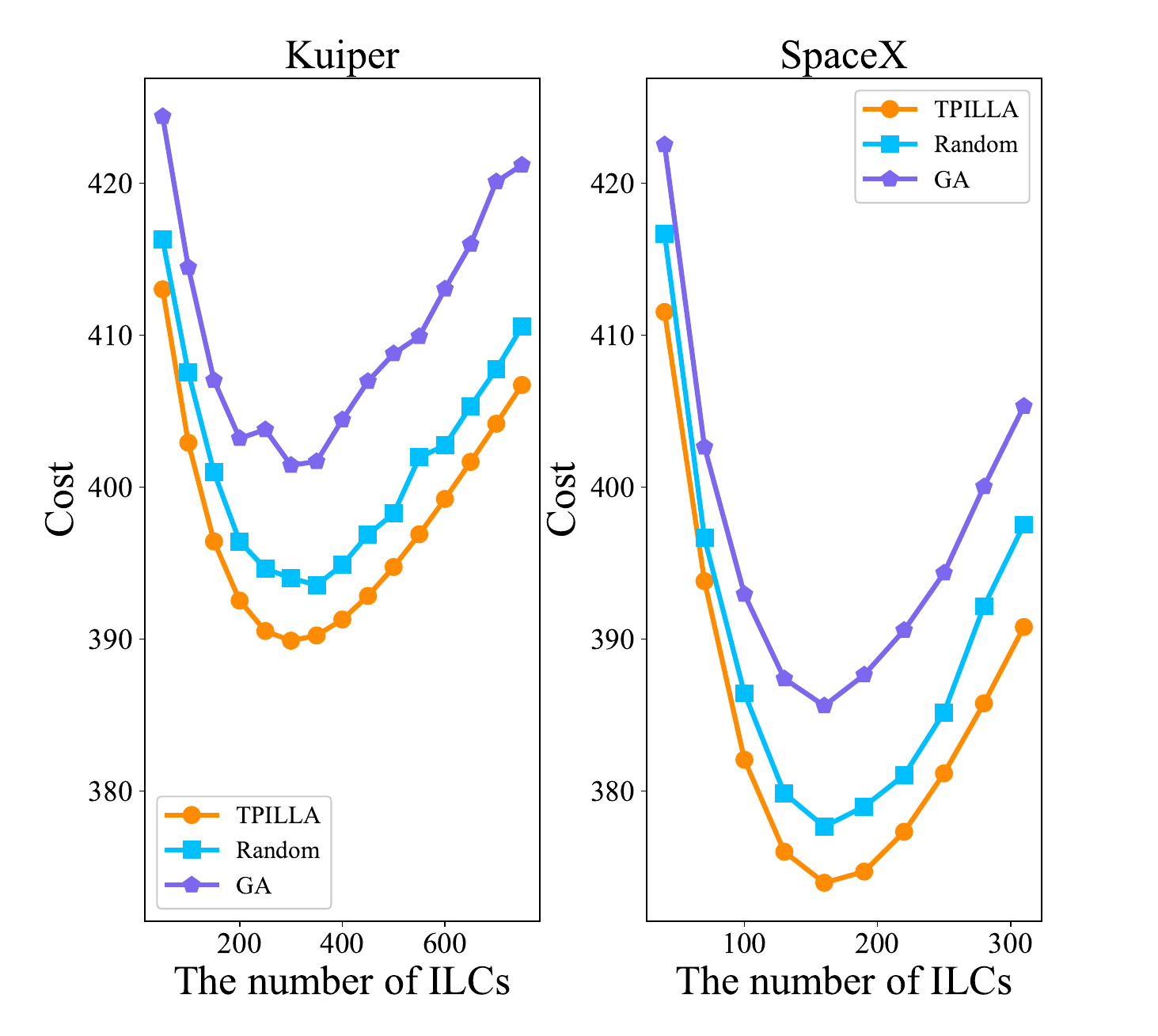}
  \caption{Total cost comparison among three algorithms in Kuiper and SpaceX constellations.}
  \label{k2}
  \vspace{-1em}
\end{figure}

\begin{figure}
  \centering
  \includegraphics[width=2.8 in]{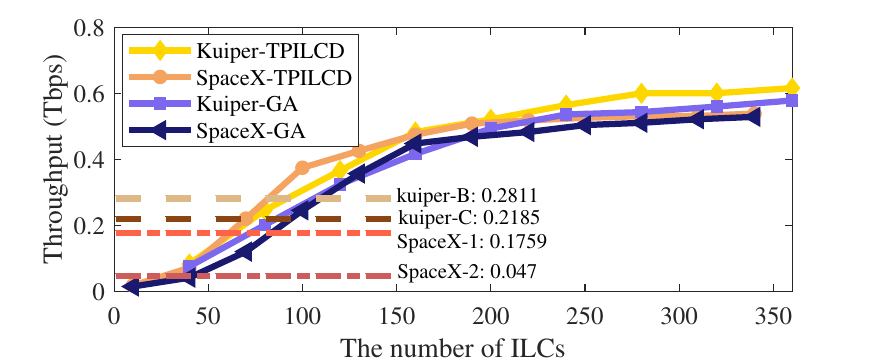}
  \caption{Throughput comparison in Kuiper and SpaceX constellations.}
  \label{t1}
  \vspace{-1em}
\end{figure}

\begin{figure}
  \centering
  \includegraphics[width=2.8 in]{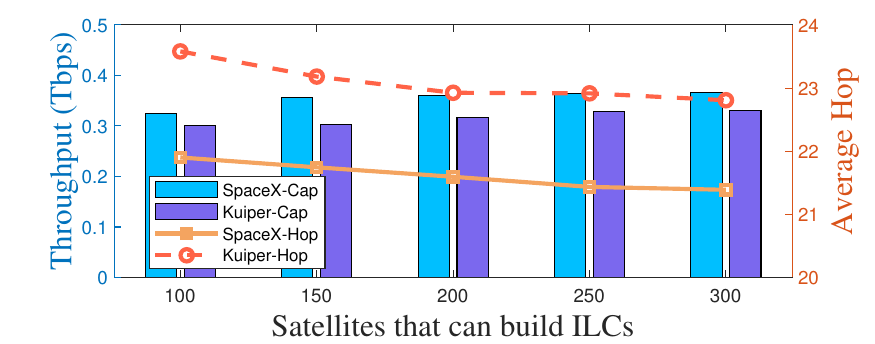}
  \caption{Throughput and hop change with different satellites that can build ILCs (SpaceX-real-ILC=100, Kuiper-real-ILCs=100).}
  \label{t2}
  \vspace{-1em}
\end{figure}

Additionally, we also count the number of handovers over 60 time slots to illustrate the stability of ILCs in Fig. \ref{k3}. Note that the X-axis is the maximum number of ILCs that can be connected per orbit, which can be regarded as the network connective capability. Generally, the number of handovers rises first and tends to stabilize, which indicates that the number of handovers can be controlled by adjusting the number of satellites that can establish the ILC on each orbit.
Moreover, since the greedy algorithm exploits the shortest distance to establish ILCs, ILCs are pretty unstable and the handover frequency is significantly high. Thus, the number of handovers of the greedy algorithm is nearly $2\times$ larger than other algorithms. Although the random algorithm deploys the distributionally ILCs in the total topology, it still ignores the influence of the link duration on the handover process, the total number of handovers is still larger than that of the proposed algorithm. We also introduce the \emph{MaxTimeWeightSum} algorithm to display the lower bound of the handovers. This algorithm focus on maximizing the sum of the time weight of all ILCs instead of taking into account other costs.



We also present the real distribution of ILCs in the Kuiper-C, as shown in Fig. \ref{k5}. The selected satellites are centrally symmetric at latitude 0, and most interlayer chains are deployed at latitude [-33,-20], [20,33], indicating that the selected satellite pairs move up or down at the same time in the manner of crossing orbit. Meanwhile, Fig. \ref{k7} also verifies that the maximum number of satellites connected per orbit is less than $S/2$ and ILCs are almost uniformly allocated on each orbit. Besides, we also exhibit the geographical distribution of the satellites that established ILCs on the SpaceX-1.
Most ILCs are preferentially allocated with the start satellite and the end satellite in orbit at high and low latitudes.
Finally, we explore the intertwined impacts of constellation's structures and the total cost on throughput, as shown in Fig. \ref{imp1}. We mainly take the Kuiper-C as a fixed layer and observe the gain of ILC deployment by coupling other constellations of different structures, where parameters $P$ and $S$ are (28,28), (36,36), (36,20), (72,22), respectively. On the one hand, throughput gain is dramatic with a large $P/S$, since the average hop reduces sharply. On the other hand, adding a few ILCs decreases throughput because the E2E hops rise significantly with the additional satellite pairs.


\begin{figure}
  \centering
  \includegraphics[width=2.8 in]{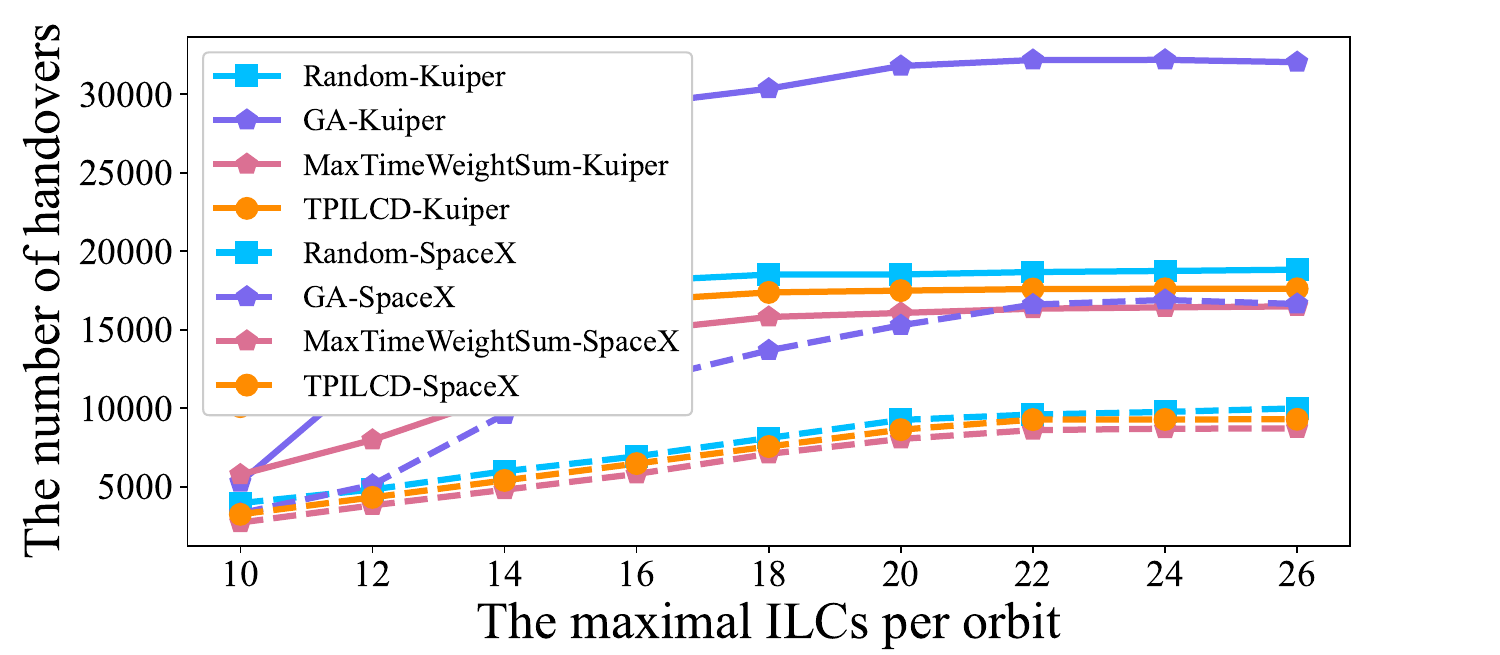}
  \caption{Total handover counts comparison among four algorithms in two mega-constellations.}
  \label{k3}
\end{figure}

\begin{figure}
  \centering
  \includegraphics[width=2.7 in]{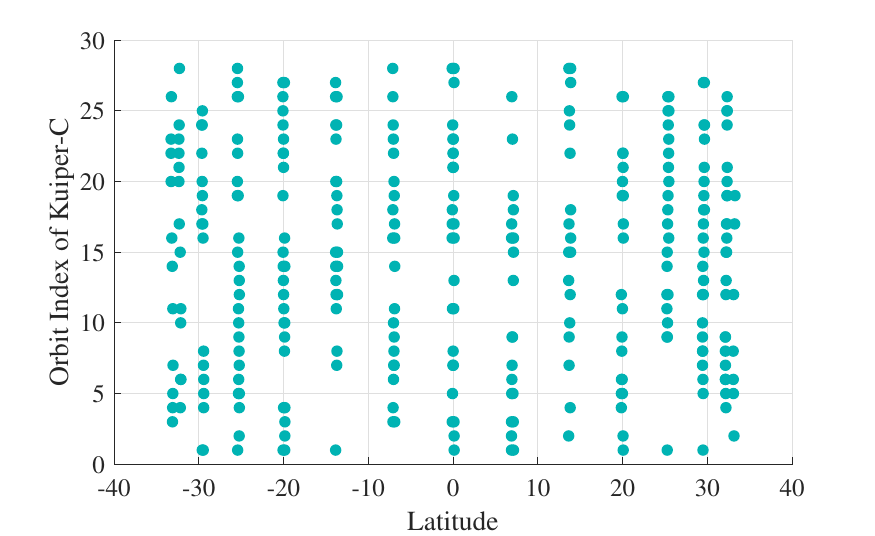}
  \caption{The satellite distribution in the Kuiper-C constellation}
  \label{k5}
  \vspace{-1em}
\end{figure}

\begin{figure}
  \centering
  \includegraphics[width=2.3 in]{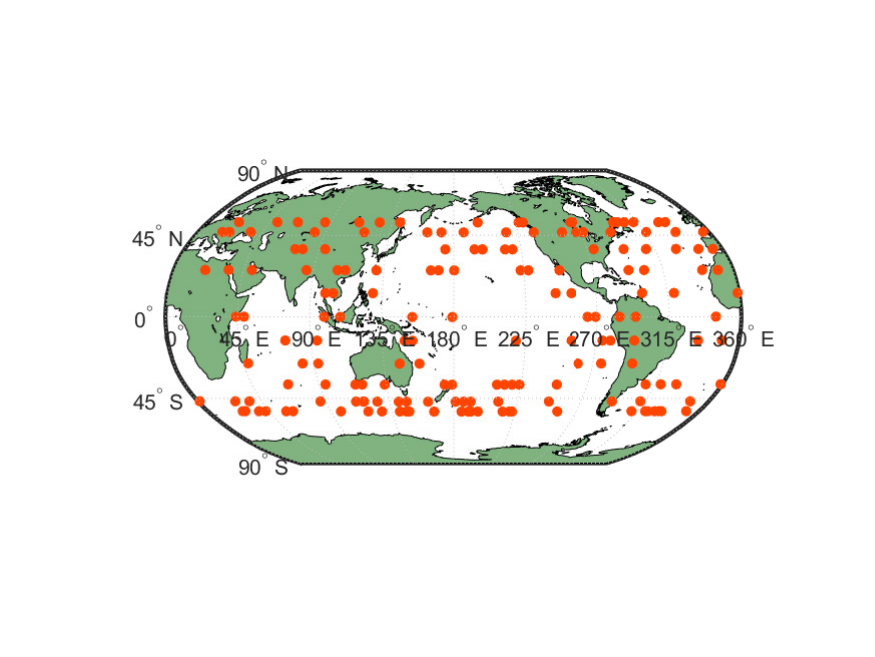}
  \caption{Geographical distribution in the SpaceX-1 constellation}
  \label{k7}
  \vspace{-1em}
\end{figure}

\begin{figure}
  \centering
  \includegraphics[width=2.5 in]{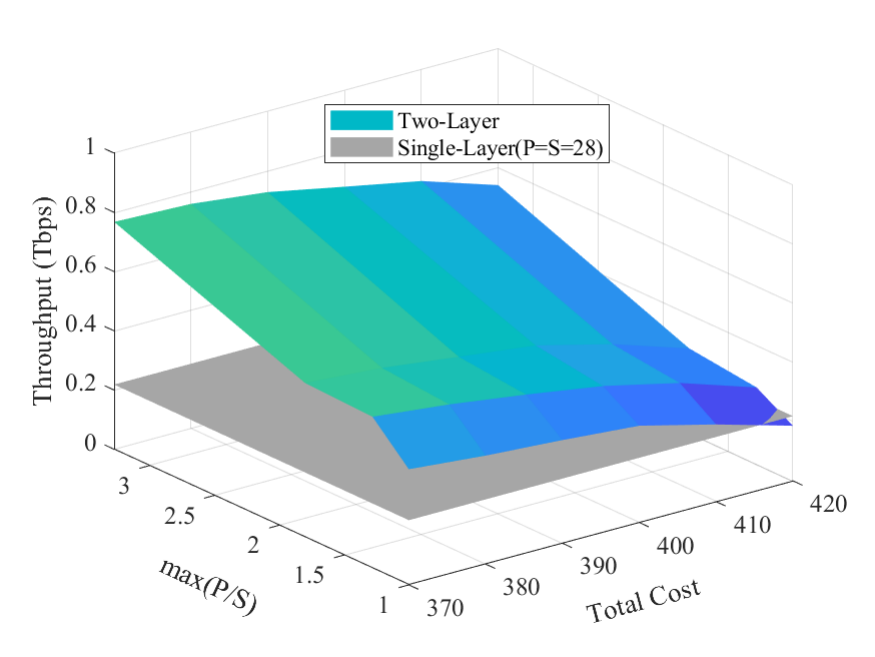}
  \caption{Throughput for different costs and structures.}
  \label{imp1}
  \vspace{-1em}
\end{figure}

\subsection{Traffic Evaluation}
Considering that adding ILCs not only can enhance network transmission efficiency, but also can offload the extra burden of monolayer and elevate the service quality, we simulate the real global traffic based on the world internet users \cite{urlglobal} and we refer to the coverage data of the Kuiper constellation in \cite{Kuiper2019}. Substantially, the traffic carried by the MLUDSN can be regarded as multiple traffic flows with different origins and destinations from terrestrial users. Thus we examplify that 1000 cities with a dense population as the origin and destination of traffic flows, and collect 100 thousand city pairs as the flows. Note that we do not utilize the traffic flows as part of modeling the original problem, but as input to evaluate the ILC building strategy. The reason has two, one is that the traffic flows are imprecise and have extreme differences between different time slices;  the other one is that the constraints such as flow conservation, and flow variables render the computational complexity even factorially with satellite and traffic matrix scale. Therefore, we mainly explore the performance indicators to demonstrate the advantages of the MLUDSN and the load balance by cross-layer connections.


\begin{figure}
  \centering
  \subfloat[]{\label{l1_1}\includegraphics[width=3.4 in]{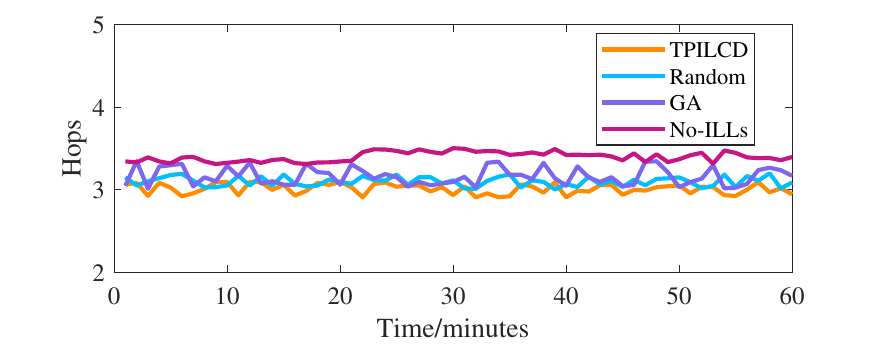}} 
  \\ \vspace{-1em}
  \subfloat[]{\label{l1_2}\includegraphics[width=3.4 in]{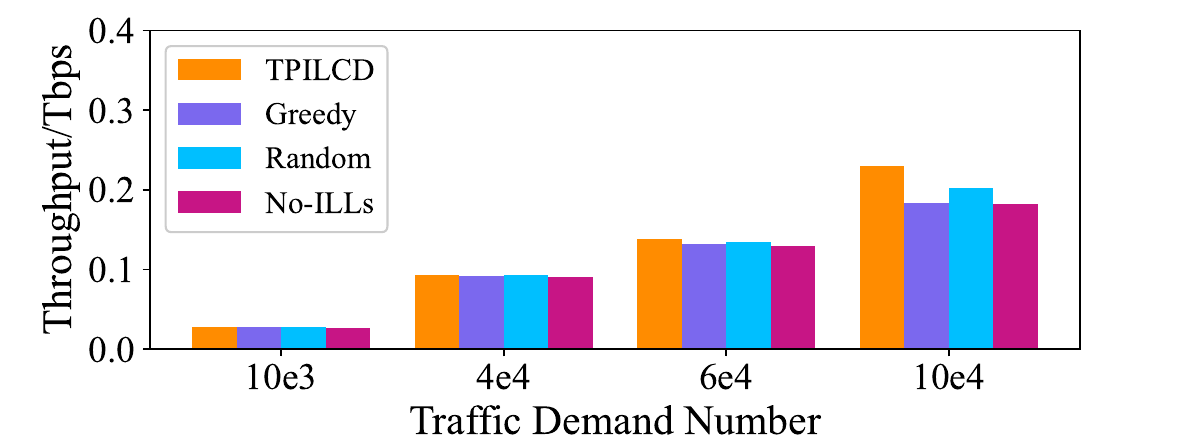}}
  \caption{(a) Fluctuation in hop count over time; (b) Fluctuation in total throughput over time.}
  \vspace{-1em}
\end{figure}

Fig. \ref{l1_1} depicts the change of the average hop count for topologies generated by different algorithms over time. The average hop by the MLUDSN is 3.0094 based on the TPILCD algorithm, and the hops by the No ILCs (No-ILLs) algorithm is 0.4 hops more than that of TPILCD.
With time changes, the hops of these algorithms all fluctuate. The standard deviations of the hops of TPILCD, Random, GA, and No-ILLs are 0.0566, 0.0614, 0.1068, and 0.0578, respectively. The TPILCD algorithm fluctuates the least, which illustrates the pretty load balance between layers. Furthermore, we investigate network throughput of different topologies in Fig. \ref{l1_2}. With the number of traffic flows increases, the total throughput rises of both four schemes. Based on Fig. \ref{t1}, it can be observed that monolayer Kuiper-C can carry about 0.2 Tbit data, and when the traffic flows are over 100 thousand, the Kuiper-C offloads the traffic into the Kuiper-B, so that the throughput obtained by the TPILCD algorithm can enhance over 20.77\% than No-ILLs.

\section{Conclusion}

In this paper, we investigate the impacts of limited ILCs on the MLUDSN's transmission capability. Specifically, to pursue the cost-effective multi-layer constellation, the relationship between the number of ILCs, their deployments and the concomitant variation of network capability is necessary to study. Therefore, we formulate the cost-efficient problem of minimizing network hops by minimizing the number of ILCs. Especially, we reveals that the trend of the APL of the MLUDSN decays logarithmically as ILCs rises, and verify that the number of ILCs should be less than half the scale of single layer in most scenarios. Considering the computational complexity increases exponentially with MLUDSN scale, we propose a TPILCD algorithm to obtain the efficient ILC deployment scheme. In the first phase, we devise an OTLC algorithm and aim to search for the candidate matching sets by narrowing the feasible region and model the APL. To reduce the number of handovers and design a more stable topology, we weigh the time metric to measure the survival duration of each link, and further propose an MTWM algorithm in the second phase to maximize the sum of time weights. Simulation results demonstrate that the ILCs should be deployed symmetrically and evenly on each orbit in the topology and the significant gain of multi-layer collaboration.

\bibliography{ref} 

\begin{thebibliography}{10}
\providecommand{\url}[1]{#1}
\csname url@samestyle\endcsname
\providecommand{\newblock}{\relax}
\providecommand{\bibinfo}[2]{#2}
\providecommand{\BIBentrySTDinterwordspacing}{\spaceskip=0pt\relax}
\providecommand{\BIBentryALTinterwordstretchfactor}{4}
\providecommand{\BIBentryALTinterwordspacing}{\spaceskip=\fontdimen2\font plus
\BIBentryALTinterwordstretchfactor\fontdimen3\font minus
  \fontdimen4\font\relax}
\providecommand{\BIBforeignlanguage}[2]{{%
\expandafter\ifx\csname l@#1\endcsname\relax
\typeout{** WARNING: IEEEtran.bst: No hyphenation pattern has been}%
\typeout{** loaded for the language `#1'. Using the pattern for}%
\typeout{** the default language instead.}%
\else
\language=\csname l@#1\endcsname
\fi
#2}}
\providecommand{\BIBdecl}{\relax}
\BIBdecl

\bibitem{9466942}
N.~Saeed, H.~Almorad, H.~Dahrouj, T.~Y. Al-Naffouri, J.~S. Shamma, and M.-S.
  Alouini, ``{Point-to-Point Communication in Integrated Satellite-Aerial 6G
  Networks: State-of-the-Art and Future Challenges},'' \emph{IEEE Open J.
  Commun. Soc.}, vol.~2, pp. 1505--1525, Jun. 2021.

\bibitem{9693471}
M.~Sheng, D.~Zhou, W.~Bai, J.~Liu, and J.~Li, ``{6G Service Coverage with Mega
  Satellite Constellations},'' \emph{China Commun.}, vol.~19, no.~1, pp.
  64--76, Jan. 2022.

\bibitem{al2022next}
B.~Al~Homssi, A.~Al-Hourani, K.~Wang, P.~Conder, S.~Kandeepan, J.~Choi,
  B.~Allen, and B.~Moores, ``{Next Generation Mega Satellite Networks for
  Access Equality: Opportunities, Challenges, and Performance},'' \emph{IEEE
  Commun. Mag.}, vol.~60, no.~4, pp. 18--24, Apr. 2022.

\bibitem{9473799}
N.~Pachler, I.~del Portillo, E.~F. Crawley, and B.~G. Cameron, ``{An Updated
  Comparison of Four Low Earth Orbit Satellite Constellation Systems to Provide
  Global Broadband},'' in \emph{Proc. ICC Workshops}, Montreal, QC, Canada,
  Jun. 2021.

\bibitem{li2016qos}
Y.~Li, T.~Jiang, M.~Sheng, and Y.~Zhu, ``Qos-aware admission control and
  resource allocation in underlay device-to-device spectrum-sharing networks,''
  \emph{IEEE Journal on selected areas in communications}, vol.~34, no.~11, pp.
  2874--2886, Nov. 2016.

\bibitem{zhou2023aerospace}
D.~Zhou, M.~Sheng, J.~Li, and Z.~Han, ``Aerospace integrated networks
  innovation for empowering 6g: A survey and future challenges,'' \emph{IEEE
  Communications Surveys \& Tutorials}, vol.~25, no.~2, pp. 975--1019, 2023.

\bibitem{9385374}
X.~Fang, W.~Feng, T.~Wei, Y.~Chen, N.~Ge, and C.-X. Wang, ``{5G Embraces
  Satellites for 6G Ubiquitous IoT: Basic Models for Integrated Satellite
  Terrestrial Networks},'' \emph{IEEE Internet Things J.}, vol.~8, no.~18, pp.
  14\,399--14\,417, Sept. 2021.

\bibitem{Kuiper2019}
\BIBentryALTinterwordspacing
T.~Fencl, P.~Burget, and J.~Bilek, ``Network topology design at 27000 km per
  hour,'' \emph{Control Engineering Practice}, vol.~19, no.~11, pp. 1287--1296,
  Nov. 2011. [Online]. Available:
  \url{https://www.sciencedirect.com/science/article/pii/S0967066111001390}
\BIBentrySTDinterwordspacing

\bibitem{9211777}
Z.~Yan, G.~Gu, K.~Zhao, Q.~Wang, G.~Li, X.~Nie, H.~Yang, and S.~Du, ``{Integer
  Linear Programming Based Topology Design for GNSSs With Inter-Satellite
  Links},'' \emph{IEEE Wireless Commun. Lett.}, vol.~10, no.~2, pp. 286--290,
  Feb. 2021.

\bibitem{9475443}
X.~Qi, B.~Zhang, Z.~Qiu, and L.~Zheng, ``{Using Inter-Mesh Links to Reduce
  End-to-End Delay in Walker Delta Constellations},'' \emph{IEEE Commun.
  Lett.}, vol.~25, no.~9, pp. 3070--3074, 2021.

\bibitem{2019A}
I.~D. Portillo, B.~G. Cameron, and E.~F. Crawley, ``{A Technical Comparison of
  Three Low Earth Orbit Satellite Constellation Systems to Provide Global
  Broadband},'' \emph{Acta Astronaut.}, vol. 159, no.~6, pp. 123--135, Mar.
  2019.

\bibitem{9829776}
Z.~Lin, H.~Li, J.~Liu, Z.~Lai, and G.~Fan, ``{Inter-networking and Function
  Optimization for Mega-Constellations},'' in \emph{Proc. IFIP Networking},
  Catania, Italy, 2022, pp. 1--9.

\bibitem{zhou2019distributionally}
D.~Zhou, M.~Sheng, B.~Li, J.~Li, and Z.~Han, ``Distributionally robust planning
  for data delivery in distributed satellite cluster network,'' \emph{IEEE
  Transactions on Wireless Communications}, vol.~18, no.~7, pp. 3642--3657,
  Jul. 2019.

\bibitem{lu2022enhancing}
Y.~Lu, Y.~Zhao, F.~Sun, F.~Yang, R.~Liang, J.~Shen, and Z.~Zuo, ``{Enhancing
  Transmission Efficiency of Mega-Constellation LEO Satellite Networks},''
  \emph{IEEE Trans. Veh. Technol., accepted}, 2022.

\bibitem{9327501}
I.~Leyva-Mayorga, B.~Soret, and P.~Popovski, ``{Inter-Plane Inter-Satellite
  Connectivity in Dense LEO Constellations},'' \emph{IEEE Trans. Wireless
  Commun.}, vol.~20, no.~6, pp. 3430--3443, June 2021.

\bibitem{9461407}
Y.~Lee and J.~P. Choi, ``{Connectivity Analysis of Mega-Constellation Satellite
  Networks With Optical Intersatellite Links},'' \emph{IEEE Trans. Aerosp.
  Electron. Syst.}, vol.~57, no.~6, pp. 4213--4226, Dec. 2021.

\bibitem{wang2022capacity}
N.~Wang, L.~Liu, Z.~Qin, B.~Liang, and D.~Chen, ``{Capacity Analysis of LEO
  Mega-Constellation Networks},'' \emph{IEEE Access}, vol.~10, pp.
  18\,420--18\,433, Feb. 2022.

\bibitem{9351765}
Q.~Chen, G.~Giambene, L.~Yang, C.~Fan, and X.~Chen, ``{Analysis of
  Inter-Satellite Link Paths for LEO Mega-Constellation Networks},'' \emph{IEEE
  Trans. Veh. Technol.}, vol.~70, no.~3, pp. 2743--2755, Mar. 2021.

\bibitem{9371230}
R.~Deng, B.~Di, H.~Zhang, L.~Kuang, and L.~Song, ``{Ultra-Dense LEO Satellite
  Constellations: How Many LEO Satellites Do We Need?}'' \emph{IEEE Trans.
  Wireless Commun.}, vol.~20, no.~8, pp. 4843--4857, Aug. 2021.

\bibitem{9322362}
R.~Deng, B.~Di, H.~Zhang, and L.~Song, ``{Ultra-Dense LEO Satellite
  Constellation Design for Global Coverage in Terrestrial-Satellite
  Networks},'' in \emph{Proc. Globalcom}, Taipei, Taiwan, Jan. 2020.

\bibitem{6692690}
Y.~Kawamoto, H.~Nishiyama, N.~Kato, N.~Yoshimura, and N.~Kadowaki, ``{Packet
  Transfer Delay Minimization by Network-Wide Equalization of Unbalanced
  Traffic Load in Multi-Layered Satellite Networks},'' in \emph{Proc. VTC
  Spring}, Dresden, Germany, 2013.

\bibitem{6353997}
H.~Nishiyama, Y.~Tada, N.~Kato, N.~Yoshimura, M.~Toyoshima, and N.~Kadowaki,
  ``{Toward Optimized Traffic Distribution for Efficient Network Capacity
  Utilization in Two-Layered Satellite Networks},'' \emph{IEEE Trans. Veh.
  Technol.}, vol.~62, no.~3, pp. 1303--1313, Mar. 2013.

\bibitem{6494333}
Y.~Kawamoto, H.~Nishiyama, N.~Kato, and N.~Kadowaki, ``{A Traffic Distribution
  Technique to Minimize Packet Delivery Delay in Multilayered Satellite
  Networks},'' \emph{IEEE Trans. Veh. Technol.}, vol.~62, no.~7, pp.
  3315--3324, 2013.

\bibitem{9685355}
P.~Wang, B.~Di, and L.~Song, ``{Multi-layer LEO Satellite Constellation Design
  for Seamless Global Coverage},'' in \emph{Proc. GLOBECOM}, Madrid, Spain,
  2021.

\bibitem{8844689}
C.-Q. Dai, L.~Guo, S.~Fu, and Q.~Chen, ``{Contact Plan Design With Directional
  Space-Time Graph in Two-Layer Space Communication Networks},'' \emph{IEEE
  Internet Things J.}, vol.~6, no.~6, pp. 10\,862--10\,874, Dec. 2019.

\bibitem{zhou2020research}
W.~Zhou, Y.-f. Zhu, Y.-y. Li, Q.~Li, and Q.-Z. Yu, ``{Research on hierarchical
  architecture and routing of satellite constellation with IGSO-GEO-MEO
  network},'' \emph{Int. J. Satell. Commun. Networking}, vol.~38, no.~2, pp.
  162--176, Dec. 2020.

\bibitem{huang2022multi}
Y.~Huang, B.~Feng, P.~Dong, A.~Tian, and S.~Yu, ``{A Multi-objective based
  Inter-Layer Link Allocation Scheme for MEO/LEO Satellite Networks},'' in
  \emph{Proc. IEEE WCNC}.\hskip 1em plus 0.5em minus 0.4em\relax Ausin, TX,
  USA: IEEE, 2022, pp. 1301--1306.

\bibitem{liu2015capacity}
R.~Liu, M.~Sheng, K.-S. Lui, X.~Wang, D.~Zhou, and Y.~Wang, ``{Capacity
  Analysis of Two-Layered LEO/MEO Satellite Networks},'' in \emph{Proc. VTC
  Spring}.\hskip 1em plus 0.5em minus 0.4em\relax Glasgow, UK: IEEE, 2015, pp.
  1--5.

\bibitem{6338484}
P.~Truchly and M.~Vangel, ``{Performance of Multilayered Satellite Networks},''
  in \emph{Proc. ELMAR,}, Zadar, Croatia, 2012.

\bibitem{9393372}
A.~U. Chaudhry and H.~Yanikomeroglu, ``{Laser Intersatellite Links in a
  Starlink Constellation: A Classification and Analysis},'' \emph{IEEE Veh.
  Technol. Mag.}, vol.~16, no.~2, pp. 48--56, June 2021.

\bibitem{wider2016ensemble}
N.~Wider, A.~Garas, I.~Scholtes, and F.~Schweitzer, ``{An Ensemble Perspective
  on Multi-layer Networks},'' in \emph{Interconnected Networks}.\hskip 1em plus
  0.5em minus 0.4em\relax Springer, Feb 2016, pp. 37--59.

\bibitem{2014throughputPro}
S.~A. Jyothi, A.~Singla, P.~B. Godfrey, and A.~Kolla, ``{Measuring and
  Understanding Throughput of Network Topologies},'' in \emph{Proc. SC}.\hskip
  1em plus 0.5em minus 0.4em\relax Salt Lake City, UT, USA: IEEE, 2014, pp.
  1--12.

\bibitem{9219130}
A.~U. Chaudhry and H.~Yanikomeroglu, ``{Free Space Optics for Next-Generation
  Satellite Networks},'' \emph{IEEE Consum. Electron. Mag.}, vol.~10, no.~6,
  pp. 21--31, Nov. 2021.

\bibitem{liu2015analytical}
R.~Liu, M.~Sheng, K.-S. Lui, X.~Wang, Y.~Wang, and D.~Zhou, ``An analytical
  framework for resource-limited small satellite networks,'' \emph{IEEE
  Communications Letters}, vol.~20, no.~2, pp. 388--391, Feb. 2015.

\bibitem{newman2003random}
M.~E. Newman \emph{et~al.}, ``{Random Graphs as Models of Networks},''
  \emph{Handbook of graphs and networks}, vol.~1, pp. 35--68, 2003.

\bibitem{dorogovtsev2008organization}
S.~N. Dorogovtsev, J.~Mendes, A.~N. Samukhin, and A.~Y. Zyuzin, ``{Organization
  of Modular Networks},'' \emph{Physical Review E}, vol.~78, no.~5, p. 056106,
  Mar 2008.

\bibitem{Floudas2009}
\BIBentryALTinterwordspacing
C.~A. Floudas, \emph{{Mixed Integer Nonlinear Programming}}.\hskip 1em plus
  0.5em minus 0.4em\relax Boston, MA: Springer US, 2009, pp. 2234--2247.
  [Online]. Available: \url{https://doi.org/10.1007/978-0-387-74759-0_394}
\BIBentrySTDinterwordspacing

\bibitem{guo2016exploiting}
K.~Guo, M.~Sheng, J.~Tang, T.~Q. Quek, and Z.~Qiu, ``Exploiting hybrid
  clustering and computation provisioning for green c-ran,'' \emph{IEEE Journal
  on Selected Areas in Communications}, vol.~34, no.~12, pp. 4063--4076, Dec.
  2016.

\bibitem{wang2007topological}
J.~Wang, L.~Li, and M.~Zhou, ``{Topological Dynamics Characterization for LEO
  Satellite Networks},'' \emph{Comput. Networks}, vol.~51, no.~1, pp. 43--53,
  Apr. 2007.

\bibitem{urlglobal}
M.~M. Group, ``{World Internet Users Statistics and 2022 World Population
  Stats},'' \url{https://www.internetworldstats.com/stats.htm}, 2021.

\end{thebibliography}
\bibliographystyle{IEEEtran}


 




\vfill

\end{document}